\setlist{nosep}
\newtheorem{defn}{Definition}
\newtheorem{thm}{Theorem}
\newtheorem{lem}{Lemma}
\newtheorem{prop}{Proposition}
\newtheorem{rmk}{Remark}
\newtheorem{exam}{Example}
\newtheorem{cro}{Corollary}
\newcommand{\bE}{\mathbb{E}}
\newcommand{\bR}{\mathbb{R}}
\newcommand{\bP}{\mathbb{P}}
\newcommand{\bZ}{\mathbb{Z}}
\newcommand{\cB}{\mathcal{B}}
\newcommand{\cI}{\mathcal{I}}
\newcommand{\cM}{\mathcal{M}}
\newcommand{\cS}{\mathcal{S}}
\newcommand{\cU}{\mathcal{U}}
\newcommand{\tens}[1]{%
  \mathbin{\mathop{\otimes}\limits_{#1}}%
}
\DeclareMathOperator{\rmd}{d\!}
\newcommand\ehat[1]{%
	\savestack{\tmpbox}{\stretchto{%
			\scaleto{%
				\scalerel*[\widthof{\ensuremath{#1}}]{\kern-.6pt\bigwedge\kern-.6pt}%
				{\rule[-\textheight/2]{1ex}{\textheight}}
			}{\textheight}%
		}{0.5ex}}%
	\stackon[1pt]{#1}{\tmpbox}%
}
\tikzset{>=latex}
\begin{document}
\title{The equilibrium properties of direct strategy profiles \\ in games with many players\thanks{Parts of this paper were presented at the 21st SAET Conference (ANU, Canberra, Australia, July 16--22, 2022); the Asian Meeting of the Econometric Society (Beijing, China, June 30--July 2, 2023); the 9th PKU-NUS Annual International Conference on Quantitative Finance and Economics (Beijing, May 17--18, 2025); and the 24th SAET Conference (Ischia, Italy, June 29--July 5, 2025). We thank the participants at these events for their valuable comments and suggestions. Enxian Chen's research is supported by the National Natural Science Foundation of China (No. 72303115) and the Fundamental Research Funds for the Central Universities (No. 63222003). Bin Wu acknowledges support from the National Natural Science Foundation of China (No. 72203154 and No. 72450003).}}

\author{Enxian~Chen\thanks{School of Economics, Nankai University, Tianjin, 300071, China. E-mail: \href{mailto:chenenxian777@gmail.com}{chenenxian777@gmail.com}.}
\and
Bin~Wu\thanks{International School of Economics and Management, Capital University of Economics and Business, Fengtai District, Beijing, 100070, China. E-mail: \href{wubin@cueb.edu.cn}{wubin@cueb.edu.cn}.}
\and
Hanping Xu\thanks{Department of Mathematics, National University of Singapore, Block S17, 10 Lower Kent Ridge Road, 119076, Singapore. E-mail: \href{e0321212@u.nus.edu}{e0321212@u.nus.edu}.}
}

\date{This version: \today}

\maketitle

\begin{abstract}
This paper studies the equilibrium properties of the \emph{direct strategy profile} in large finite-player games. Each player in such a strategy profile simply adopts a strategy as she would have used in a symmetric equilibrium of an idealized large game. We show that, under a mild continuity condition, (i) direct strategy profiles constitute a convergent sequence of approximate equilibria as the number of players tends to infinity, and (ii) realizations of such strategy profiles also form a convergent sequence of (pure strategy) approximate equilibria with probability approaching one. Our findings provide a simple and decentralized approach for implementing equilibrium in large games, yielding outcomes that are asymptotically optimal both ex ante and ex post.

\bigskip
\textbf{JEL classification}: C60; C62; C72

\bigskip
\textbf{Keywords}:  Large games, large finite-player games, auxiliary mapping, direct strategy profile, symmetric equilibrium, approximate equilibrium.

\end{abstract}

\clearpage
\tableofcontents
\setlength{\parskip}{4pt}
\clearpage

\section{Introduction}\label{sec-intro}
Games with a continuum of agents (referred to as large games) have been extensively studied in the literature. Typically, such games model strategic interactions involving a large but finite number of participants. By assuming a continuum of players where each player's unilateral deviation does not affect the aggregate action distribution, various desirable properties, such as the existence of pure strategy Nash equilibrium, have been established in these idealized large games. Despite significant progress in recent decades, the predominant focus in literature has been on the existence and convergence of Nash equilibria in large games. However, an important question remains unanswered: do players truly benefit from such an idealized process?  Specifically, while players may acknowledge that the Nash equilibrium of the idealized large game predicts a desirable social choice outcome, how do they behave to implement such an outcome in their real-life (finite) games? Furthermore, given an implementation strategy profile, do players have incentives to adhere to this designated strategy profile?

The first question seems to have a straightforward answer: each player could simply adopt the strategy they would select in an idealized large game. For instance, if a Nash equilibrium in the large game requires every player to choose a common strategy, then players in the real finite game can follow that equilibrium strategy. While this approach works in some situations, it often fails due to coordination issues arising from multiple individual-optimal choices.

To illustrate this, we consider a simple example of routing games. Imagine that many drivers traveling from an origin node $o$ to a terminal node $t$ via two paths, denoted as $a$ and $b$ respectively. The travel time on each path $p$ depends solely on the proportion of drivers using that path, denoted as $\tau(p).$ Path $b$  is a wider route where the travel time matches the proportion of drivers using it. In contrast, Path $a$  is narrower, causing congestion to build up more quickly, resulting in a travel time that is twice the proportion of drivers choosing it, i.e. $2\tau(a)$.

\begin{figure}[htb!]
	\centering
	\begin{tikzpicture}[>=stealth',shorten >=1pt,auto,node distance=2cm,semithick]
		\tikzstyle{every state}=[text=black]
		\node[state, red] at (0,0) (0) {$o$};
		\node[state, blue] at (8,0) (1) {$t$};
		\draw[->]   (0) -- (1, 1) -- node [above,sloped] {Path $a$} node [below,sloped] {Travel time is $2\tau(a)$} (7,1) -- (1);
		\draw[->]   (0) -- (1, -1) -- node [above,sloped] {Path $b$} node [below,sloped] {Travel time is $\tau(b)$} (7,-1) -- (1);
	\end{tikzpicture}
\end{figure}

This is a large but finite game, and to simplify the analysis, it is natural to consider its continuum-player counterpart. In the corresponding continuum game, there exists a unique Nash equilibrium action distribution in which one-third of the drivers choose the narrower Path~$a$, and the remaining two-thirds choose Path~$b$. In this equilibrium, both paths yield the same travel time of $\tfrac{2}{3}$, so each driver is indifferent between choosing $a$ and $b$. The resulting action distribution $\tau^* = (\tfrac{1}{3}, \tfrac{2}{3})$ represents the socially desirable outcome predicted by the continuum model. The challenge lies in implementing $\tau^*$ in the actual finite-player setting. Although $\tau^*$ specifies the target aggregate behavior, it does not uniquely pin down individual actions: both $a$ and $b$ are best responses. Without coordination, independently chosen actions may fail to reproduce $\tau^*$, leading to deviations from equilibrium. Achieving $\tau^*$ in pure strategies thus requires communication or centralized assignment---both of which become impractical in large, decentralized environments.

We address this issue in two steps. First, we observe that players in large games are typically distinguished by their characteristics, most notably by their payoff functions. For instance, drivers in a large traffic network may differ by vehicle type, with each type associated with distinct travel times. Similarly, buyers in an online marketplace can be grouped by their preferences over goods. In such environments, coordination becomes feasible if players base their strategies on observable characteristics, so that those with identical characteristics adopt the same strategy. This idea is formalized through the notion of symmetry. The first step of our approach is therefore to construct a symmetric equilibrium. Given an aggregate action distribution $\tau^*$ from a Nash equilibrium, we show that it can be implemented by a symmetric strategy profile $f^*$ satisfying two properties: (i) players with the same characteristic (i.e., the same payoff function) choose the same strategy in $f^*$, and (ii) the aggregate action distribution induced by $f^*$ equals $\tau^*$.

Second, we introduce the notion of a ``direct strategy profile'' for the finite-player game. In this strategy profile, each player selects a strategy based on her characteristic by adopting the strategy prescribed for that characteristic in the symmetric equilibrium $f^*$ of the continuum model. To illustrate, consider the routing game discussed earlier. Since all drivers share the same payoff function, they are identical in terms of characteristics and therefore adopt the same strategy in the symmetric equilibrium $f^*$. For instance, $f^*$ may prescribe a simple randomized strategy $\mu^*$ that assigns probability $\tfrac{1}{3}$ to Path $a$ and $\tfrac{2}{3}$ to Path $b$. In the corresponding direct strategy profile for the finite game, each player is assigned this strategy $\mu^*$.

Given the implementation via symmetrization and the direct strategy profile described above, a natural question arises: do players in the finite game have incentives to follow the direct strategy profile? This question concerns the equilibrium properties of such strategy profiles. In the routing game introduced earlier, the answer is affirmative: the direct strategy profile constitutes an approximate Nash equilibrium. Suppose there are $n$ drivers. Based on a combinatorial argument,\footnote{Each driver chooses Path~$a$ with probability $\tfrac{1}{3}$ and faces a random number of other drivers making the same choice. The expected travel time for choosing Path~$a$ is $2 \sum_{k = 0}^{n - 1} \left( \tfrac{1}{n} + \tfrac{k}{n} \right) \binom{n-1}{k} (\tfrac{1}{3})^k (\tfrac{2}{3})^{n - 1 - k} = \tfrac{2n + 4}{3n}$, while the expected travel time for Path~$b$ is $\sum_{k = 0}^{n - 1} \left( \tfrac{1}{n} + \tfrac{n - 1 - k}{n} \right) \binom{n-1}{k} (\tfrac{1}{3})^k (\tfrac{2}{3})^{n - 1 - k} = \tfrac{2n + 1}{3n}$.} 
the expected travel time for a driver choosing Path~$a$ is $\tfrac{2n + 4}{3n}$, while choosing Path~$b$ yields $\tfrac{2n + 1}{3n}$. Thus, the direct strategy profile forms a $\tfrac{1}{3n}$-Nash equilibrium\footnote{In an $\varepsilon$-Nash equilibrium, a large portion of players (more than $1-\varepsilon$) choose strategies that are within $\varepsilon$ of their optimal payoffs; see Definition~\ref{defn:approximateNE-finite} below.} in the $n$-player game. Moreover, as $n \to \infty$, the approximation error vanishes. This implies that following the direct strategy profile becomes asymptotically optimal in this routing game.

However, the asymptotic optimality of the direct strategy profile does not hold in general. Example~\ref{exam-Approximate} in Subsection~\ref{subsec-continuity} presents a large game $G$ with a Nash equilibrium $g$, for which the associated sequence of direct strategy profiles fails to converge. As the main result of this paper, we recover convergence by restricting attention to a class of large games that satisfy a continuity condition. In particular, Theorem~\ref{thm:main1} shows that for any large game admitting a convergent sequence of finite-player approximations, the corresponding sequence of direct strategy profiles forms a convergent sequence of approximate symmetric equilibria, provided a continuity condition is satisfied. This condition is relatively mild and holds in many large games studied in the literature, including those with finitely many characteristics and continuous large games with continuous equilibria.

Since a direct strategy profile typically involves randomized strategies, Theorem~\ref{thm:main1} establishes only ex ante asymptotic optimality. It is therefore natural to ask whether the direct strategy profile remains approximately optimal ex post, after the resolution of uncertainty. As the second main result of this paper, Theorem~\ref{thm:main2} shows that, under the same continuity condition, the sequence of realized direct strategy profiles forms a convergent sequence of (pure strategy) approximate equilibria, with probability approaching one. Together, the concept of the direct strategy profile and our two main theorems provide a complete answer to the motivating question: players in a large finite game benefit from the idealization process by playing the direct strategy profile.

Our main results offer a theoretical explanation for individual behavior in large strategic environments. For example, people increasingly use traffic apps, such as Google Maps, to select routes before traveling. In most traffic apps, a driver must input an origin point, a destination point, and the type of vehicle they drive (e.g., car, bus, or truck). The app then calculates the expected travel times for different feasible paths based on idealized models and historical traffic data. Finally, it recommends several optimal choices to the driver based on its calculations and the vehicle type. Typically, a driver will randomly select among the optimal choices recommended by the app. This behavior is consistent with the notion of a direct strategy profile. Given the large number of drivers, such decentralized choices become approximately optimal---both ex ante and, with high probability, ex post. Similar reasoning applies to other settings, such as buyer decisions in large online marketplaces.\footnote{For further examples of symmetric equilibrium in large economies, see \citet{Spiegler2006}, \citet{Spiegler2016}, and \citet{Tirole1988}.}

The remainder of the paper is organized as follows. Section~\ref{sec-model} introduces the models of large games and large finite-player games, along with the relevant equilibrium concepts. Our main results are presented in Section~\ref{sec-main results}. Section~\ref{sec-discussion} discusses related results and reviews the relevant literature. Finally, all proofs are collected in Section~\ref{sec-appendix}.


\section{The basic model}\label{sec-model}
In this section, we introduce some notation and basic definitions related to large games and large finite-player games. In such games, all players share a common compact action set, and each player's payoff depends on her own chosen action as well as the action distribution induced by all players' choices. The definitions in our model follow standard conventions in the literature; see, for example, \cite{KS2002}.

\subsection{Large games}\label{subsec-large}
A large game is defined as follows. Let $(I, \mathcal{I}, \lambda)$ be an atomless probability space representing the set of players.\footnote{Throughout this paper, we assume that all probability spaces are complete and countably additive.} Let $A$ be a compact metric space representing a common action set, equipped with the Borel $\sigma$-algebra $\mathcal{B}(A)$.\footnote{To simplify the analysis, we focus on large games with a common action space. All the main results in this paper can be generalized to settings where players have different feasible action sets. See Remark~\ref{rmk:action correspondence} for more details.} The set of probability measures on $A$ is denoted by $\mathcal{M}(A)$. Given an action profile for all players, the induced action distribution---which specifies the proportion of players choosing each action in $A$ (also referred to as the societal summary)---can be identified as an element of $\mathcal{M}(A)$. Each player's payoff is a bounded, continuous function on $A \times \mathcal{M}(A)$, reflecting continuous dependence on both her own action and the societal summary.

Let $\mathcal{U}_A$ be the space of bounded continuous functions on $A \times \mathcal{M}(A)$, endowed with the sup-norm topology and the corresponding Borel $\sigma$-algebra. In a general large game, a player's characteristic consists of her feasible action set and her payoff function $u_i$ (an element of $\mathcal{U}_A$). Since players share a common action set $A$ in our setting, $\mathcal{U}_A$ can be regarded as the characteristic space for all players.

A large game $G$ is a measurable mapping from $(I, \mathcal{I}, \lambda)$ to $\mathcal{U}_A$, assigning a payoff function to each player. A \emph{pure strategy profile} $f$ is a measurable function from $(I, \mathcal{I}, \lambda)$ to $A$. Let $\lambda f^{-1}$ denote the societal summary (also written as $s(f)$), which is the action distribution induced by $f$. Specifically, for any measurable subset $B \subseteq A$, $[s(f)](B)$ represents the proportion of players choosing actions in $B$.

A randomized strategy for each player $i$ is a probability measure $\mu \in \mathcal{M}(A)$. A \emph{randomized strategy profile} can be viewed as a measurable mapping $g: I \to \cM(A)$.\footnote{A measurable mapping $g: I \to \mathcal{M}(A)$ can also be viewed as a transition probability $g: I \times \mathcal{B}(A) \to [0,1]$ such that: (i) For every $B \in \mathcal{B}(A)$, $g(\cdot; B)$ is measurable; (ii) For $\lambda$-almost all $i \in I$, $g(i; \cdot) \in \mathcal{M}(A)$.} Note that every pure strategy profile $f$ naturally induces a randomized strategy profile $g^f$ defined by $g^f(i) = \delta_{f(i)}$ for each player $i \in I$.\footnote{Here $\delta_{f(i)}$ denotes the Dirac probability measure that assigns probability one to $\{f(i)\}$.} Given a randomized strategy profile $g$, the societal summary $s(g)$ is modeled as the average action distribution across all players, that is, $s(g) = \int_I g(i) \, \mathrm{d}\lambda(i) \in \mathcal{M}(A)$. When $A$ is a finite set of actions, $s(g)$ can be identified with a point in the simplex in $\mathbb{R}^{|A|}$. In the case of an infinite action set, $s(g)$ satisfies $[s(g)](B) = \int_I g(i,B) \, \mathrm{d}\lambda(i)$ for every measurable subset $B \subseteq A$. Moreover, a randomized strategy profile $g$ is said to be \emph{symmetric} if for any two players $i$ and $i'$, we have $g(i) = g(i')$ whenever $G(i) = G(i')$, i.e., players with the same characteristic (payoff function) use the same strategy.

We now present the equilibrium concepts for large games. The formal definition of a randomized strategy Nash equilibrium is as follows.

\begin{defn}[Randomized strategy Nash equilibrium]
\label{defn:randNE}
\rm
A randomized strategy profile $g \colon I \rightarrow \cM(A)$ is said to be a \emph{randomized strategy Nash equilibrium} if for almost all $i \in I$,
$$ \int_{A} u_i \bigl( a, s(g) \bigr) g(i, \dif a)  \ge  \int_{A} u_i \bigl( a, s(g) \bigr) \dif\mu(a) \text{ for all } \mu \in \cM(A). $$
\end{defn}

Therefore, a randomized strategy profile $g$ is a \emph{randomized strategy Nash equilibrium} if it is optimal for almost every player with respect to the societal summary $s(g)$, in terms of expected payoff. In the case of a pure strategy profile $f$, the societal summary $s(f)$ coincides with the action distribution $\lambda f^{-1}$. This leads to the following definition of a pure strategy Nash equilibrium.

\begin{defn}[Pure strategy Nash equilibrium]
\label{defn:pureNE}
\rm
A pure strategy profile $f \colon I \rightarrow A$ is said to be a \emph{pure strategy Nash equilibrium} if, for all $i\in I,$
$$u_{i}\bigl(f(i),  \lambda f^{-1}\bigr) \ge u_{i}(a, \lambda f^{-1}) \text{ for all } a \in A.$$
\end{defn}

A randomized strategy Nash equilibrium always exists in large games. By contrast, the existence of a pure strategy Nash equilibrium generally requires additional conditions: it is guaranteed in large games with finite action sets (\cite{S1973}) or when the player space satisfies certain saturation condition (\cite{HSS2017}).\footnote{See also \cite{QY2014} for counterexamples and further discussion.}


\subsection{Large finite-player games}\label{subsec-finite}
In this subsection, we introduce a class of large finite-player games. Let $(I_n, \mathcal{I}_n, \lambda_n)$ be a finite probability space representing the set of players. We assume that $|I_n| = n$, that $\mathcal{I}_n$ is the power set of $I_n$, and that $\lambda_n(i) = \tfrac{1}{n}$ for each $i \in I_n$.\footnote{The assumption that each player has equal weight is not essential, but we adopt it to simplify the model.} For simplicity, we assume that all players share a common action space $A$.

Similarly, each player's payoff function depends on her own choice and the probability distribution on $A$ that is induced from the action profile (i.e., the societal summary). Clearly, the set of such action distributions is a subset of $\cM(A)$ that is denoted by
$$
D^n = \Bigl\{   \tau \in \cM(A) \;\Big\vert\; \tau =  \sum\limits_{j \in I_n} \lambda_n(j) \delta_{ a_j }  \text{  where } a_j \in A \text{ for all } j \in I_n          \Bigr\},
$$
Player $i$'s payoff function is then given by a bounded continuous function $u_i^n \colon A \times \mathcal{M}(A) \to \mathbb{R}$, clearly, $u_i^n \in \mathcal{U}_A$. Thus, a large finite-player game $G_n$ can be represented as a mapping from $I_n$ to $\mathcal{U}_A$, with $G_n(i) = u_i^n$ for all $i \in I_n$.

In this finite-player game, a \emph{pure strategy profile} $f^n$ is a mapping from $( I_n , \cI_n, \lambda_n)$ to $A$. Hence, given a pure strategy profile $f^n$, the payoff function for player $i$ is
$$
u_i^n(f^n) = u_i^n \Bigl( f^n(i), \sum\limits_{j \in I_n } \lambda_n(j) \delta_{f^n(j)} \Bigr),
$$
here we slightly abuse the notation $u_i^n(f^n)$ to denote player $i$'s payoff given the strategy profile $f^n$. Similarly, a randomized strategy is a probability distribution $\mu \in \cM(A)$. A \emph{randomized strategy profile} $g^n$ is a mapping from  $( I_n , \cI_n, \lambda_n)$ to $\cM(A)$. Thus, given a randomized strategy profile $g^n$, player $i$'s (expected) payoff is given by
$$
u_i^n(g^n) = \int_{A^n} u_i^n \Bigl(a_i, \sum\limits_{j \in I_n } \lambda_n(j) \delta_{a_j} \Bigr) \tens{j \in I_n} g^n(j, {\rm{d}} a_j),
$$
where $\tens{ j \in I_n} g^n(j,  {\rm{d}} a_j)$ is the product probability measure on the product space $A^n$. The societal summary induced by $g^n$ is $s(g^n) = \int_{I_n} g^n(i) \mathrm{d} \lambda_n(i)$. Moreover, a randomized strategy profile $g^n$ is said to be \emph{symmetric} if for any two players $i$ and $i'$, $g^n(i) = g^n(i')$ whenever $G_n(i) = G_n(i')$. Finally, we state the definitions of  randomized strategy Nash equilibrium and $\varepsilon$-Nash equilibrium as follows.

\begin{defn}[Randomized strategy Nash equilibrium]
	\label{defn:randNE-finite}
	\rm
	A randomized strategy profile $g^n \colon I_n \rightarrow \cM(A)$ is said to be a \emph{randomized strategy Nash equilibrium} if for all $i \in I_n$,
	$$ U_i^n(g^n) \ge  U_i^n( \mu  ,g^n_{- i}  ) \text{ for all } \mu \in \cM(A), $$
	where $(\mu, g^n_{-i})$ represents the randomized strategy profile such that player $i$ plays the randomized strategy $\mu$, and player $j$ plays the randomized strategy $g^n(j)$ for all $j \in I_n \backslash \{i\}$.
\end{defn}

\begin{defn}[$\varepsilon$-Nash equilibrium]
 	\label{defn:approximateNE-finite}
     \rm
 	For any  $\varepsilon \ge 0$, a randomized strategy profile $g^n \colon I_n \to \cM(A)$ is said to be an $\varepsilon$-Nash equilibrium if there exists a subset of players $I_n^{\varepsilon} \subseteq I_n$ such that $\lambda_n( I_n^{\varepsilon}) \ge 1 - \varepsilon$ and for all $i \in I_n^{\varepsilon}$,
 	\[
 	U_i^n(g^n)  \ge  U_i^n (\mu,g^n_{-i} ) - \varepsilon \text{ for all } \mu \in \cM(A).
 	\]
\end{defn}

Thus, in an $\varepsilon$-Nash equilibrium, most players choose strategies that are within $\varepsilon$ of their best responses, and only a small portion of players (no more than $\varepsilon$) may obtain higher than $\varepsilon$ by deviation. Clearly, a Nash equilibrium is also an $\varepsilon$-Nash equilibrium ($\varepsilon = 0$).

Throughout the rest of this paper, a Nash equilibrium always refers to a randomized strategy Nash equilibrium, and an $\varepsilon$-Nash equilibrium is also called an approximate Nash equilibrium.


\section{Main results}\label{sec-main results}
In this section, we present the main results. Let $G_n$ be a large finite-player game with $n$ players. To simplify the equilibrium analysis, we assume that the players in $G_n$ use a large game $G$---featuring a continuum of players---as an approximation. Specifically, we assume that the characteristic information (i.e., the payoff functions) of the players in $G_n$ is contained within that of the large game $G$. Formally, this means $G_n(I_n) \subset \mathrm{supp}\, \lambda G^{-1}$.\footnote{Since $G$ is a measurable mapping from the player space $(I, \mathcal{I}, \lambda)$ to the characteristic space $\mathcal{U}_A$, the measure $\lambda G^{-1}$ represents the distribution of characteristics among players. Its support, denoted by $\mathrm{supp}\, \lambda G^{-1}$, is the smallest closed subset of $\mathcal{U}_A$ that has full measure.} As we will show, this assumption ensures that the players in the finite-player game $G_n$ can directly implement an equilibrium outcome of the large game $G$.

Suppose that $\tau^*$ is the societal summary (i.e., the action distribution) induced by a Nash equilibrium $g$ of the large game $G$. This equilibrium action distribution $\tau^*$ is considered a desirable social choice outcome for the players in the finite-player game $G_n$. As discussed in Section~\ref{sec-intro}, when attempting to implement the outcome $\tau^*$, each player may face multiple optimal responses, creating a coordination problem. To resolve this issue, we adopt a process to symmetrize the equilibrium action distribution. Specifically, there exists a symmetric Nash equilibrium $\widetilde{g}$ of the large game $G$ such that its societal summary also equals $\tau^*$; that is, $s(\widetilde{g}) = \tau^*$. This symmetrization result relies on the existence of an auxiliary mapping, defined as follows.

\begin{defn}[Auxiliary mapping]
 	\label{defn:map}
     \rm
Given an equilibrium action distribution $\tau^*$, the \emph{auxiliary mapping} associated with $\tau^*$ is a function $\overline{g} \colon \mathcal{U}_A \to \mathcal{M}(A)$ such that the composition $\widetilde{g} := \overline{g} \circ G \colon I \to \mathcal{M}(A)$ is a Nash equilibrium of the large game $G$ and satisfies $s(\widetilde{g}) = \tau^*$.
\end{defn}

\begin{figure}[htb!]
	\centering
\begin{tikzpicture}
\def\a{1.5} \def\b{2}
\path
(-\a,0) node (A) {$I$}      
(\a,0) node (B) {$\mathcal{M}(A)$}
(0,-\b) node[align=center] (C) {$\mathcal{U}_{A}$};
\begin{scope}[nodes={midway,scale=.75}]
\draw[->] (A)--(B) node[above]{$\widetilde{g}$};
\draw[->] (A)--(C.120) node[left]{$G$};
\draw[dashed,->] (C.60)--(B) node[right]{$\overline{g}$};
\end{scope}
\end{tikzpicture}  
\caption{Construction of a symmetric equilibrium via the auxiliary mapping} \label{fig1}
\end{figure}

\noindent
Note that the auxiliary mapping $\overline{g}$ assigns a strategy to each possible player characteristic. As a result, in the composed profile $\widetilde{g} = \overline{g} \circ G$, players who share the same characteristic adopt the same strategy. Lemma~\ref{lem:auxiliary} below guarantees the existence of such a symmetric equilibrium $\widetilde{g}$ whose societal summary coincides with $\tau^*$.

\begin{lem}\label{lem:auxiliary}
For any equilibrium action distribution $\tau^*$ of a large game $G$, the auxiliary mapping $\overline{g}$ always exists.
\end{lem}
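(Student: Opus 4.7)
The plan is to begin with any randomized Nash equilibrium $g\colon I\to\mathcal{M}(A)$ of $G$ whose societal summary is $\tau^*$ (such a $g$ exists by assumption, since $\tau^*$ is an equilibrium action distribution), and then to ``symmetrize'' $g$ along the characteristic map $G$ by conditional expectation, i.e.\ disintegration. The resulting randomized strategy will depend on a player only through her characteristic, thereby producing the required auxiliary mapping $\overline{g}$.

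More concretely, define a joint probability $P$ on $\mathcal{U}_A\times A$ by $P(E\times B):=\int_{G^{-1}(E)} g(i,B)\,\mathrm{d}\lambda(i)$ for $E\in\mathcal{B}(\mathcal{U}_A)$ and $B\in\mathcal{B}(A)$. Its marginal on $\mathcal{U}_A$ is exactly $\nu:=\lambda G^{-1}$, and since $\mathcal{U}_A$ and $A$ are Polish (under the sup-norm and its original metric, respectively), a standard disintegration theorem produces a measurable Markov kernel $(u,B)\mapsto P(B\mid u)$. I would then set $\overline{g}(u)(\cdot):=P(\cdot\mid u)$ and $\widetilde{g}:=\overline{g}\circ G$. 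The aggregate identity $s(\widetilde{g})=\tau^*$ then follows from the tower/Fubini property of the disintegration: for every $B\in\mathcal{B}(A)$, integrating $P(B\mid u)$ against $\nu$ recovers $P(\mathcal{U}_A\times B)=s(g)(B)=\tau^*(B)$.

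For the equilibrium property, let $A^*(u):=\arg\max_{a\in A}u(a,\tau^*)$, which is a closed subset of $A$ that is jointly measurable in $(u,a)$ since each $u$ is continuous in $a$ and $u\mapsto u(a,\tau^*)$ is continuous in the sup-norm. Because $g$ is a Nash equilibrium, $g(i)$ is supported on $A^*(G(i))$ for $\lambda$-a.e.\ $i$, so $\int_I g(i,A\setminus A^*(G(i)))\,\mathrm{d}\lambda(i)=0$. The same disintegration identity as above then forces $\int_{\mathcal{U}_A}\overline{g}(u)(A\setminus A^*(u))\,\mathrm{d}\nu(u)=0$, so $\overline{g}(u)$ is supported on $A^*(u)$, and hence lies in the convex set $\mathcal{M}(A^*(u))$ of best responses to $\tau^*$ for a player with characteristic $u$, for $\nu$-a.e.\ $u$. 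It follows that $\widetilde{g}=\overline{g}\circ G$ is a Nash equilibrium of $G$ with $s(\widetilde{g})=\tau^*$, as required.

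The main delicate point is the functional-analytic setup underlying the disintegration: one needs $\mathcal{U}_A$ to be standard Borel so that regular conditional probabilities exist and so that $\{(u,a):a\in A^*(u)\}$ is jointly measurable. Since $A\times\mathcal{M}(A)$ is compact metric, $\mathcal{U}_A=C(A\times\mathcal{M}(A))$ is a separable Banach space in the sup-norm, hence Polish and standard Borel, so these technicalities go through. Once this setup is in place, everything else reduces to routine applications of the tower property together with the convexity of $\mathcal{M}(A^*(u))$.
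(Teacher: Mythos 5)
Your proposal is correct and follows essentially the same route as the paper: both construct the joint characteristic--action distribution $\int_I \delta_{G(i)}\otimes g(i)\,\mathrm{d}\lambda(i)$ on $\mathcal{U}_A\times A$, disintegrate it with respect to the marginal $\lambda G^{-1}$ (using that $\mathcal{U}_A$ and $A$ are Polish), and take the resulting kernel as $\overline{g}$. The only difference is that the paper delegates the verification that $\widetilde{g}=\overline{g}\circ G$ is still a Nash equilibrium with the same societal summary to Lemma~5 of \citet{SSY2020}, whereas you prove it directly via the tower property and the support argument on the argmax correspondence; both are fine.
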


Lemma~\ref{lem:auxiliary} follows as a corollary of \citet[Theorem 2(ii)]{SSY2020}, which relies on the technique of disintegration to establish the existence of $\overline{g}$. Since our model focuses on games with a common action set, we provide a shorter and more direct proof in Subsection~\ref{subsec-proof auxiliary}. Notably, playing a symmetric strategy profile $\widetilde{g}$ resolves coordination issues, as players need only respond based on their characteristics. Moreover, since the societal summary of $\widetilde{g}$ coincides with $\tau^*$, players can coordinate on the symmetric equilibrium $\widetilde{g}$ to implement the desired equilibrium outcome $\tau^*$.

Each player $i$ in the finite-player game $G_n$ is associated with an \emph{imagined partner} $i'$ in the large game $G$ who shares the same characteristic; that is, $G_n(i) = G(i')$.\footnote{This relies on the assumption introduced at the beginning of this section: $G_n(I_n) \subset \mathrm{supp}\, \lambda G^{-1}$.} A natural way for players in $G_n$ to implement the desired societal summary $\tau^*$ is to imitate the strategy used by their imagined partners in $G$. Specifically, player $i$ adopts the strategy $\widetilde{g}(i') = \overline{g} \circ G(i') = \overline{g} \circ G_n(i)$. The resulting strategy profile in $G_n$ is referred to as the direct strategy profile, formally defined below.

\begin{defn}[Direct strategy profile]
 	\label{defn:direct}
     \rm
Given a large finite-player game $G_n$, a large game $G$, and an associated auxiliary mapping $\overline{g}$, the \emph{direct strategy profile} of $G_n$ is defined by $g^n = \overline{g} \circ G_n$.
\end{defn}
  
\begin{figure}[htb!]
	\centering 
  \begin{tikzpicture}
\def\a{1.5} \def\b{2}
\path
(-\a,0) node (A) {$I_n$}      
(\a,0) node (B) {$\mathcal{M}(A)$}
(0,-\b) node[align=center] (C) {$\mathcal{U}_{A}$};
\begin{scope}[nodes={midway,scale=.75}]
\draw[dashed, ->] (A)--(B) node[above]{$g^n$};
\draw[->] (A)--(C.120) node[left]{$G_n$};
\draw[->] (C.60)--(B) node[right]{$\overline{g}$};
\end{scope}
\end{tikzpicture}   
\caption{Direct strategy profile} \label{fig2}
\end{figure}

\noindent
Observe that the direct strategy profile $g^n$ is symmetric, as players with the same characteristic choose the same strategy, determined by the auxiliary mapping $\overline{g}$. In this strategy profile, each player adopts the strategy that her imagined partner would use in the symmetric equilibrium $\widetilde{g}$ of the large game $G$.

We now revisit the motivating example from Section~\ref{sec-intro} to illustrate the roles of $\widetilde{g}$, $\overline{g}$, and $g^n$. In that example, $\widetilde{g}(i) = \tfrac{1}{3} \delta_a + \tfrac{2}{3} \delta_b$ for every player $i$ in the large routing game. Since all players share the same payoff function $u$, the image $G(I)$ is the singleton $\{u\}$, and thus $\overline{g}(u) = \tfrac{1}{3} \delta_a + \tfrac{2}{3} \delta_b$. Consequently, for any finite routing game $G_n$ with $n$ drivers, the corresponding direct strategy profile satisfies $g^n(i) = \tfrac{1}{3} \delta_a + \tfrac{2}{3} \delta_b$ for each driver $i \in I_n$.

Our main theorems examine the equilibrium properties of the direct strategy profile $g^n$. As discussed in Section~\ref{sec-intro}, $g^n$ is generally not an exact Nash equilibrium; instead, it can be regarded as an $\varepsilon_n$-Nash equilibrium for some $\varepsilon_n \geq 0$. Varying $n$, we obtain a sequence of direct strategy profiles $\{g^n\}_{n \in \mathbb{Z}_{+}}$ corresponding to the sequence of finite-player games $\{G_n\}_{n \in \mathbb{Z}_{+}}$. This raises a natural question: does the sequence $\{g^n\}_{n \in \mathbb{Z}_{+}}$ converge? More specifically, does $\varepsilon_n$ tend to zero as the games $\{G_n\}_{n \in \mathbb{Z}_{+}}$ converge to $G$? To answer this, we first introduce the following concept regarding the convergence of games.

\begin{defn}[Convergence of games]
 	\label{defn:appro}
     \rm
A sequence of finite-player games $\{G_n\}_{n \in \mathbb{Z}_{+}}$ converges to a large game $G$ if the corresponding distributions of player characteristics converge weakly; that is,
$$
\lambda_n \circ G_n^{-1} \xrightarrow{w} \lambda \circ G^{-1} \quad \text{on } \mathcal{U}_A.
$$
\end{defn}

Definition~\ref{defn:appro} formalizes a widely used notion of convergence in the study of large games: the weak convergence of the distributions of player characteristics. Under this notion, the sequence of characteristic distributions in the finite-player games $\{G_n\}_{n \in \mathbb{Z}_{+}}$ converges to the limiting distribution induced by the large game $G$. The game-theoretic interpretation is that, as the number of players grows, individual heterogeneity becomes negligible, and the aggregate structure of the game stabilizes. This convergence justifies using large games as approximations for analyzing strategic behavior in large but finite populations.

We are now ready to present our first main result. Theorem~\ref{thm:main1} shows that if the auxiliary mapping $\overline{g}$ is almost everywhere continuous on the support of the characteristic distribution (i.e., $\mathrm{supp}\,\lambda G^{-1}$), then the sequence of direct strategy profiles $\{g^n\}_{n \in \mathbb{Z}_{+}}$---each constructed by applying $\overline{g}$ to the characteristics in $G_n$---forms a convergent sequence of approximate Nash equilibria.

\begin{thm}
	\label{thm:main1}
	Let $\{G_n\}_{n \in \mathbb{Z}_{+}}$ be a sequence of finite-player games that converges to a large game $G$. Let $\overline{g}$ be an auxiliary mapping associated with $G$, and define $\{g^n\}_{n \in \mathbb{Z}_{+}}$ as the sequence of direct strategy profiles induced by $\overline{g}$. If $\overline{g}$ is almost everywhere continuous on $\mathrm{supp}\,\lambda G^{-1}$, then there exists a sequence $\{\varepsilon_n\}_{n \in \mathbb{Z}_{+}}$ with $\varepsilon_n \to 0$ such that each $g^n$ is an $\varepsilon_n$-Nash equilibrium of $G_n$.
\end{thm}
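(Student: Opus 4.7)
The plan is to reduce the finite-player optimality check in $G_{n}$ to an optimality condition in the large game $G$ evaluated at the limiting societal summary $\tau^{*}=s(\widetilde{g})$, where $\widetilde{g}=\overline{g}\circ G$. Concretely, I would show that, for large $n$ and for a high fraction of players, every player in $G_{n}$ is approximately best-responding against the realised action distribution. Three approximations are involved and handled separately: (a) replacing the realised empirical distribution by its mean, (b) replacing the mean $s(g^{n})$ by $\tau^{*}$, and (c) replacing the finite-game best-response gap at type $G_{n}(i)$ by its $\lambda G^{-1}$-a.e.\ zero large-game value.

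First, $s(g^{n})\to\tau^{*}$ weakly in $\cM(A)$: for any bounded continuous $\phi:A\to\bR$,
\[
\int_{A}\phi\,ds(g^{n})=\int_{\cU_{A}}\Bigl(\int_{A}\phi\,d\overline{g}(u)\Bigr)\,d(\lambda_{n}G_{n}^{-1})(u),
\]
and the integrand is bounded by $\|\phi\|_{\infty}$ and continuous at every continuity point of $\overline{g}$; so the Portmanteau theorem for bounded $\lambda G^{-1}$-a.e.\ continuous integrands, together with $\lambda_{n}G_{n}^{-1}\xrightarrow{w}\lambda G^{-1}$, gives the claim. Second, because the realised actions are independent with $a_{j}\sim g^{n}(j)$, the empirical distribution $\hat{\tau}^{n}=\tfrac{1}{n}\sum_{j}\delta_{a_{j}}$ satisfies $\mathrm{Var}(\int\phi\,d\hat{\tau}^{n})\leq\|\phi\|_{\infty}^{2}/n$, which metrises to $\hat{\tau}^{n}\xrightarrow{p}s(g^{n})$, hence $\hat{\tau}^{n}\xrightarrow{p}\tau^{*}$; the same holds after any single unilateral deviation since perturbing one coordinate changes $\hat{\tau}^{n}$ by $O(1/n)$ in total variation. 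Using tightness of $\{\lambda_{n}G_{n}^{-1}\}$ pick, for any $\varepsilon>0$, a compact $K\subset\cU_{A}$ with $\lambda G^{-1}(K)>1-\varepsilon$ and $\liminf_{n}\lambda_{n}G_{n}^{-1}(K)\geq 1-\varepsilon$; by Arzel\`a--Ascoli the functions in $K$ are uniformly bounded and equicontinuous on the compact product space $A\times\cM(A)$, so that uniformly in $i$ with $u_{i}^{n}\in K$ and in $\mu\in\cM(A)$,
\[
U_{i}^{n}(g^{n})-\int_{A}u_{i}^{n}(a,\tau^{*})\,d\overline{g}(u_{i}^{n})(a)\longrightarrow 0,\qquad U_{i}^{n}(\mu,g_{-i}^{n})-\int_{A}u_{i}^{n}(a,\tau^{*})\,d\mu(a)\longrightarrow 0.
\]

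The final ingredient is the deviation-gap function
\[
h(u):=\sup_{\mu\in\cM(A)}\int_{A}u(a,\tau^{*})\,d\mu(a)-\int_{A}u(a,\tau^{*})\,d\overline{g}(u)(a),
\]
which is non-negative, vanishes $\lambda G^{-1}$-a.e.\ (by the Nash property of $\widetilde{g}$), and is continuous at every continuity point of $\overline{g}$: its minuend is continuous on $\cU_{A}$, and its subtrahend is continuous in $u$ wherever $\overline{g}$ is, by the joint continuity of $(u,\nu)\mapsto\int u(\cdot,\tau^{*})\,d\nu$ under uniform convergence in $u$ and weak convergence in $\nu$. The main obstacle is to pass from this a.e.\ statement to a finite-$n$ frequency bound on $\lambda_{n}G_{n}^{-1}(\{h>\delta\})$, because a $\lambda G^{-1}$-null set need not be a continuity set of $\lambda G^{-1}$. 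My intended fix is to integrate the truncation $h\wedge M$ rather than an indicator: $h\wedge M$ is bounded and $\lambda G^{-1}$-a.e.\ continuous, hence Portmanteau gives $\int(h\wedge M)\,d(\lambda_{n}G_{n}^{-1})\to 0$, and Markov's inequality then yields $\lambda_{n}G_{n}^{-1}(\{h>\delta\})\leq\delta^{-1}\int(h\wedge M)\,d(\lambda_{n}G_{n}^{-1})+\lambda_{n}G_{n}^{-1}(\{h>M\})$, with the second term controlled by uniform tightness of $\{\lambda_{n}G_{n}^{-1}\}$ (on any compact $K$, $h$ is bounded by $2\sup_{u\in K}\|u\|_{\infty}$). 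Defining $\varepsilon_{n}$ as the sum of the three resulting errors---the tail mass $\lambda_{n}G_{n}^{-1}(K^{c})$, the uniform payoff-approximation error on $K$, and the proportion $\lambda_{n}G_{n}^{-1}(\{h>\delta_{n}\})$---and choosing the auxiliary parameters ($\varepsilon$, $\delta_{n}$, $M_{n}$) to vanish or diverge sufficiently slowly, yields $\varepsilon_{n}\to 0$ and the conclusion that $g^{n}$ is an $\varepsilon_{n}$-Nash equilibrium of $G_{n}$.
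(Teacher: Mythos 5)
Your proposal is correct, and for most of the argument it runs parallel to the paper's proof: both use tightness of $\{\lambda_n G_n^{-1}\}$ together with the Arzel\`a--Ascoli theorem to restrict attention to a large fraction of players with uniformly bounded, equicontinuous payoffs; both use a variance/Chebyshev bound to replace the realized empirical distribution (and its one-player deviations) by the mean $s(g^n)$, which is exactly the content of the paper's Lemma~\ref{lem-norm}; and both obtain $s(g^n)\to s(\widetilde g)=\tau^*$ by applying the Portmanteau/mapping theorem to the bounded, $\lambda G^{-1}$-a.e.\ continuous integrand $u\mapsto\int_A\phi\,\rmd\overline g(u)$, exactly as in Step~3(ii) of the paper.

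Where you genuinely diverge is in how the best-response property is transferred from $G$ to $G_n$. The paper arranges, in the proof of Lemma~\ref{lem:auxiliary}, that $\widetilde g(i)$ is a best response to $s(\widetilde g)$ for \emph{every} $i\in I$ (by modifying $\widetilde g$ on a $\lambda$-null set), so each player of $G_n$ inherits an \emph{exact} zero deviation gap at $\tau^*$ and the only remaining errors are the two approximation errors; the excluded players are just those outside the tightness set. You instead use only the a.e.\ Nash property, introduce the deviation-gap function $h$, and control $\lambda_n G_n^{-1}(\{h>\delta\})$ by integrating the truncation $h\wedge M$ --- bounded, Borel measurable (since $\overline g$ is measurable and $u\mapsto\max_a u(a,\tau^*)$ is sup-norm continuous), $\lambda G^{-1}$-a.e.\ continuous, and with $\lambda G^{-1}$-integral zero --- against $\lambda_n G_n^{-1}$ and applying Markov's inequality. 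This correctly circumvents the fact that $\{h>\delta\}$ need not be a $\lambda G^{-1}$-continuity set, and it buys you some robustness: you never need the everywhere-best-response refinement of the symmetric equilibrium, and the argument applies to any characteristic in $\supp\lambda G^{-1}$ rather than only those in $G(I)$ (which is what the paper's Step~5 actually invokes). The cost is a heavier final step and one extra error term $\lambda_n G_n^{-1}(\{h>\delta_n\})$ in $\varepsilon_n$. Two small points to make explicit when writing this up: the continuity of $h$ is relative to the subspace $\supp\lambda G^{-1}$, which suffices because all the measures $\lambda_n G_n^{-1}$ are carried by that closed set; and the passage from the pointwise Chebyshev bound for each test function to convergence of $\hat\tau^n$ to $s(g^n)$ in the Prohorov metric requires a finite $\varepsilon$-net of the bounded-Lipschitz unit ball, as in Step~2 of the paper's Lemma~\ref{lem-norm}.
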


Example~\ref{exam-Approximate} in Subsection~\ref{subsec-continuity} below illustrates that the continuity condition imposed on the auxiliary mapping $\overline{g}$ is necessary for the conclusion of Theorem~\ref{thm:main1} to hold. Nonetheless, this requirement is relatively mild and satisfied by many classes of large games studied in the literature. A particularly important case occurs when the large game features only finitely many player characteristics---that is, when $G(I)$ is a finite subset of $\mathcal{U}_A$. In such settings, $\overline{g}$ is trivially almost everywhere continuous, since any function defined on a finite set is continuous on its support. Consequently, Theorem~\ref{thm:main1} applies directly, and as shown in Corollary~\ref{coro:finite type}, the sequence of direct strategy profiles constructed from the corresponding finite-player games forms a convergent sequence of approximate symmetric equilibria.

\begin{cro}
	\label{coro:finite type}
	Let $G \colon (I, \cI, \lambda) \to \cU_A$ be a large game such that $G(I)$ is a finite subset of $\cU_A$, and let $\{G_n\}_{n \in \mathbb{Z}_+}$ be a sequence of finite games converging to $G$. Suppose $\{g^n\}_{n \in \mathbb{Z}_+}$ is the sequence of direct strategy profiles induced by the auxiliary mapping $\overline{g}$. Then, there exists a sequence $\{\varepsilon_n\}_{n \in \mathbb{Z}_+}$ with $\varepsilon_n \to 0$ such that each $g^n$ is an $\varepsilon_n$-Nash equilibrium of $G_n$.
\end{cro}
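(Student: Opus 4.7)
The plan is to deduce Corollary~\ref{coro:finite type} as an immediate application of Theorem~\ref{thm:main1}; the only thing to verify is that the continuity hypothesis on the auxiliary mapping $\overline{g}$ holds automatically whenever $G(I)$ is finite. Accordingly, the entire proof reduces to a topological observation about $\mathrm{supp}\,\lambda G^{-1}$, followed by a one-line appeal to the main theorem.

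First I would examine the structure of $\mathrm{supp}\,\lambda G^{-1}$ when $G(I)$ is a finite subset of $\mathcal{U}_A$. Since $G$ takes only finitely many values, the pushforward measure $\lambda G^{-1}$ is a discrete probability measure concentrated on the finite set $G(I) \subset \mathcal{U}_A$. Its support is therefore the (finite) set of atoms, namely $\{u \in G(I) : \lambda(G^{-1}(u)) > 0\}$. Because $\mathcal{U}_A$ is a metric space (equipped with the sup-norm topology), every finite subset is closed and, more importantly, discrete in the relative topology: for each $u \in \mathrm{supp}\,\lambda G^{-1}$, one can pick a radius $r_u > 0$ smaller than the minimum sup-norm distance from $u$ to the other finitely many points of the support, producing an open ball $B(u, r_u) \subset \mathcal{U}_A$ whose intersection with $\mathrm{supp}\,\lambda G^{-1}$ is exactly $\{u\}$.

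Next I would use this discreteness to show that $\overline{g}$ is continuous at every point of $\mathrm{supp}\,\lambda G^{-1}$, when viewed as a function with that set as its domain. Given any sequence $\{u_k\} \subset \mathrm{supp}\,\lambda G^{-1}$ converging in $\mathcal{U}_A$ to some $u \in \mathrm{supp}\,\lambda G^{-1}$, the isolating ball $B(u, r_u)$ forces $u_k = u$ for all sufficiently large $k$, so $\overline{g}(u_k) = \overline{g}(u)$ eventually. Hence the restriction of $\overline{g}$ to the support is continuous everywhere, and in particular almost everywhere with respect to $\lambda G^{-1}$. This is exactly the hypothesis of Theorem~\ref{thm:main1}.

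The final step is simply to invoke Theorem~\ref{thm:main1}: since $\{G_n\}_{n \in \mathbb{Z}_+}$ converges to $G$ by assumption, and the continuity condition on $\overline{g}$ has just been verified, there exists a sequence $\{\varepsilon_n\}_{n \in \mathbb{Z}_+}$ with $\varepsilon_n \to 0$ such that each direct strategy profile $g^n = \overline{g} \circ G_n$ is an $\varepsilon_n$-Nash equilibrium of $G_n$, which is the claim. The only conceivable subtlety is interpreting ``almost everywhere continuous on $\mathrm{supp}\,\lambda G^{-1}$''; either reading (continuity of the restriction, or continuity of $\overline{g}$ at points of the support as a map on $\mathcal{U}_A$) is handled by the discreteness argument above, so no real obstacle arises.
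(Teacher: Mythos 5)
Your proposal is correct and follows essentially the same route as the paper, which disposes of the corollary by observing that when $G(I)$ is finite the support of $\lambda G^{-1}$ is a finite (hence discrete) set, so $\overline{g}$ is trivially (almost everywhere) continuous on it, and Theorem~\ref{thm:main1} applies directly. Your version merely spells out the isolating-ball argument that the paper leaves implicit.
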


This result has practical significance. In many real-world applications, the number of distinct player characteristics is limited, effectively making the characteristic space finite. Therefore, Theorem~\ref{thm:main1} together with Corollary~\ref{coro:finite type} imply that, in large finite-player games, it is asymptotically optimal for players to adopt the strategies played by their imagined partners in the corresponding large game.

\begin{rmk}\label{rmk:action correspondence}
\rm
Theorem~\ref{thm:main1} can be extended to games in which players may have different feasible action spaces---i.e., each player's characteristic consists of both a feasible action set and a payoff function. In this more general setting, the notions of auxiliary mapping and direct strategy profile can be defined in a similar manner. Moreover, if the auxiliary mapping is almost everywhere continuous on $\mathrm{supp}\, \lambda G^{-1}$, then the sequence of induced direct strategy profiles converges.
\end{rmk}

\begin{rmk}\label{rmk:continuous}
\rm
Beyond the setting of games with finitely many characteristics, the continuity condition on $\bar g$ is also satisfied in many games with infinitely many characteristics. For instance, consider the case where both the characteristic mapping $G \colon I \to \cU_A$ and the strategy profile $\widetilde g \colon I \to \cM(A)$ are continuous.\footnote{Here, $\cU_A$ is endowed with the sup-norm topology, $I$ is equipped with the topology induced by $G$, and $\cM(A)$ is endowed with the weak topology.} Then, as illustrated in Figure~\ref{fig1}, the auxiliary mapping $\overline g$ is also continuous. For related discussions on continuous Nash equilibria in games with continuous player spaces, see \cite{Kim1997}.\footnote{\cite{Kim1997} adopts the argmax topology on the characteristic space $\cU_A$, which differs from the sup-norm topology used here.}
\end{rmk}

We conclude this section by presenting our second main result, which examines the equilibrium property of direct strategy profiles after the resolution of uncertainty.  Given a sequence $\{g^n\}_{n \in \mathbb{Z}_+}$, where each $g^n$ is a randomized strategy profile in the finite-player game $G_n$, standard probability theory ensures the existence of a probability space $(\Omega, \Sigma, \mathbb{P})$ and a sequence of measurable mappings $\{x^n \colon I_n \times \Omega \to A\}_{n \in \mathbb{Z}_+}$ such that:
\begin{itemize}
    \item[(i)] For each $i \in I_n$ and $n \in \mathbb{Z}_+$, the distribution of the random variable $x^n_i := x^n(i, \cdot)$ coincides with $g^n(i)$;
    \item[(ii)] For each $n \in \mathbb{Z}_+$, the family of random variables $\{x^n_i\}_{i \in I_n}$ are pairwise independent.
\end{itemize}
For each \(\omega \in \Omega\), the mapping \(x^n(\cdot, \omega) \colon I_n \to A\) defines a pure strategy profile, called a \emph{realization} of \(g^n\).

For each $n \in \bZ_+ $ and $\varepsilon \ge 0$, we consider the subset $\Omega^n_\varepsilon \subseteq \Omega$ consisting of realizations that are $\varepsilon$-Nash equilibria in pure strategies:
$$
\Omega^{n}_\varepsilon = \left\{\omega \in \Omega \,\middle|\, x^n(\cdot, \omega) \text{ is an } \varepsilon\text{-pure strategy Nash equilibrium of } G_n \right\}.
$$
We are now ready to state the second main result, which shows that realizations of the direct strategy profiles form a convergent sequence of approximate pure strategy equilibria with probability approaching one.

\begin{thm}
	\label{thm:main2}
	Given a sequence of finite-player game $\{G_n\}_{n \in \bZ_+}$ that converges to a large game $G$, let $\overline{g}$ be an auxiliary mapping of $G$. Define $\{g^n\}_{n \in \bZ_{+}}$ as the sequence of direct strategy profiles induced by $\overline{g}$. If $\overline{g}$ is almost everywhere continuous on $\mathrm{supp } \lambda G^{-1}$, then for any $\varepsilon > 0$, 
$$\lim\limits_{n \to \infty} \bP(\Omega^{n}_\varepsilon) = 1.$$
\end{thm}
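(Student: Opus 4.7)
The plan is to bridge the mixed-strategy equilibrium property of $g^n$ with its pure-strategy realization by combining (i) weak convergence of the expected societal summary to $\tau^*:=s(\bar g\circ G)$, (ii) a law-of-large-numbers concentration of the random empirical summary, and (iii) a portmanteau-based bound on the aggregate realized regret against $\tau^*$. These ingredients are stitched together by tightness of $\{\lambda_n G_n^{-1}\}$ together with Arzel\`a--Ascoli equicontinuity on compact sets of characteristics.

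\emph{Step 1 (summary convergence and concentration).} First I would verify $\tau^n:=s(g^n)=\tfrac{1}{n}\sum_{i}\bar g(G_n(i))\to\tau^*$ weakly: for any bounded continuous $f:A\to\mathbb{R}$, the map $u\mapsto\int f\,d\bar g(u)$ is bounded and continuous at $\lambda G^{-1}$-a.e.\ $u$ (by the a.e.\ continuity of $\bar g$), so the portmanteau theorem applied to $\lambda_n G_n^{-1}\xrightarrow{w}\lambda G^{-1}$ yields $\int f\,d\tau^n\to\int f\,d\tau^*$. Pairwise independence of $\{x^n_j\}_{j\in I_n}$ with $x^n_j\sim g^n(j)$ then gives $\mathrm{Var}\bigl(\int f\,d\tau^{n,\omega}\bigr)\le\|f\|_\infty^2/n$ for $\tau^{n,\omega}:=\tfrac{1}{n}\sum_{j}\delta_{x^n_j(\omega)}$, so $\tau^{n,\omega}\to\tau^*$ in probability in the Prokhorov metric on $\mathcal{M}(A)$.

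\emph{Step 2 (aggregate realized regret).} To control the per-player regret against $\tau^*$, set $\rho(a,u):=\sup_{a'\in A}u(a',\tau^*)-u(a,\tau^*)$ and $\psi(u):=\int_A\rho(a,u)\,\bar g(u)(da)$. Since $\bar g\circ G$ is a Nash equilibrium of $G$ with summary $\tau^*$, we have $\psi=0$ for $\lambda G^{-1}$-a.e.\ $u$, and $\psi$ is bounded and continuous wherever $\bar g$ is; portmanteau once more gives $\mathbb{E}\bigl[\tfrac{1}{n}\sum_{i}\rho(x^n_i(\omega),u^n_i)\bigr]=\int\psi\,d(\lambda_n G_n^{-1})\to 0$. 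A Markov inequality then ensures that, with probability approaching one, at least a $(1-\eta)$-fraction of players has realized actions that are $\eta$-best responses to $\tau^*$ for their own payoffs, for any prescribed $\eta>0$.

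\emph{Step 3 (assembly) and main obstacle.} To turn these $\tau^*$-estimates into a bound on the actual pure-strategy deviation gain, I would use tightness of $\{\lambda_n G_n^{-1}\}$ to select a compact $K\subset\mathcal{U}_A$ with $\lambda_n\{i:u^n_i\in K\}\ge 1-\eta$ eventually; by Arzel\`a--Ascoli, $K$ is uniformly equicontinuous on $A\times\mathcal{M}(A)$. On the event $\{d_P(\tau^{n,\omega},\tau^*)<\eta\}$, using $\|\tau^n_{-i,a}(\omega)-\tau^{n,\omega}\|_{TV}\le 2/n$ with $\tau^n_{-i,a}(\omega):=\tfrac{1}{n}\delta_a+\tfrac{1}{n}\sum_{j\ne i}\delta_{x^n_j(\omega)}$, equicontinuity transfers the $\tau^*$-regret bound into an $O(\eta)+\rho(x^n_i(\omega),u^n_i)$ bound on player $i$'s pure-strategy deviation gain whenever $u^n_i\in K$. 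Intersecting the three good events---characteristic in $K$, small realized regret, and concentrated summary---yields $\bP(\Omega^n_\varepsilon)\to 1$ for every $\varepsilon>0$. I expect the delicate step to be the portmanteau argument for $\psi$ in Step 2, which hinges on the a.e.\ continuity hypothesis on $\bar g$; without it the realized support of $g^n(i)$ can persistently lie away from best responses to $\tau^*$, the very pathology behind Example~\ref{exam-Approximate}.
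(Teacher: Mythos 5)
Your proposal is correct in outline but takes a genuinely different route from the paper's. The paper decomposes the realized deviation gain around the \emph{expected} summary $s(g^n)$: it uses Lemma~\ref{lem-norm} to replace $s(x^n)(\omega)$ and $s(\delta_a,x^n_{-i}(\omega))$ by $s(g^n)$, then invokes Step~2 of the proof of Theorem~\ref{thm:main1} and the ex ante $\varepsilon$-equilibrium property of $g^n$ (i.e.\ Theorem~\ref{thm:main1} itself) to control the middle term $u_i^n(x^n_i(\omega),s(g^n))-u_i^n(a,s(g^n))$. You instead work directly against the limit summary $\tau^*$: your regret functional $\psi(u)=\int_A\rho(a,u)\,\overline g(u)(\mathrm{d}a)$ vanishes $\lambda G^{-1}$-a.e.\ because a.e.\ action in the support of $\widetilde g(i)$ is an \emph{exact} best response to $s(\widetilde g)=\tau^*$, and a portmanteau-plus-Markov argument then shows most realized actions are near-best responses to $\tau^*$, after which equicontinuity on a tight compact $K$ transfers this to the realized summary. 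This bypasses Theorem~\ref{thm:main1} entirely and handles more transparently the passage from optimality of the mixture $g^n(i)$ \emph{on average} to optimality of an individual realization $x^n_i(\omega)$ --- a point the paper's proof passes over quickly when it reads the $\varepsilon$-equilibrium inequality as holding for the realized action $x^n_i(\omega)$ rather than in expectation over $g^n(i)$.

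One technical gap you should patch: in Step~2 you assert that $\psi$ is \emph{bounded}, but payoff functions in $\cU_A$ are only individually bounded, not uniformly bounded over $\mathrm{supp}\,\lambda G^{-1}$, so $\rho(a,u)\le 2\|u\|_\infty$ gives no uniform bound and the portmanteau theorem cannot be applied to $\psi$ as written. The fix is a one-line truncation: work with $\psi_C(u)=\int_A\bigl(\rho(a,u)\wedge C\bigr)\overline g(u)(\mathrm{d}a)$, which is bounded by $C$, a.e.\ continuous wherever $\overline g$ is, and still identically zero $\lambda G^{-1}$-a.e.\ (since $\rho(\cdot,u)=0$ $\overline g(u)$-a.e.\ there); the Markov step only needs $\mathbf{1}\{\rho>\eta\}\le(\rho\wedge C)/\eta$ for any $C\ge\eta$, so the conclusion of Step~2 survives unchanged. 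With that repair, and keeping the standing assumption $G_n(I_n)\subset\mathrm{supp}\,\lambda G^{-1}$ in view when you apply portmanteau for a.e.-continuous integrands, your assembly in Step~3 goes through.
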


As in Theorem~\ref{thm:main1}, the continuity assumption on $\overline{g}$ in Theorem~\ref{thm:main2} is also essential, as illustrated by Example~\ref{exam-Approximate} in Subsection~\ref{subsec-continuity}. Together, Theorems~\ref{thm:main1} and \ref{thm:main2} establish that, for players in a large finite-player game, adopting the direct strategy profile is asymptotically optimal---not only ex ante, in expectation, but also ex post, with high probability.


\section{Discussion}\label{sec-discussion}
This section is divided into three parts. Subsection~\ref{subsec-continuity} presents a counterexample demonstrating that the continuity condition imposed in Theorems~\ref{thm:main1} and~\ref{thm:main2} is essential. Subsection~\ref{subsec-symmetric pure} investigates direct strategy profiles in pure strategies, showing that, under certain conditions, such strategy profiles exist and form a convergent sequence. Finally, Subsection~\ref{subsec-literature} reviews related contributions from the existing literature.

\subsection{Failure of convergence without the continuity condition}\label{subsec-continuity}

In this subsection, we demonstrate that the continuity assumption on the auxiliary mapping $\overline{g}$ in Theorems~\ref{thm:main1} and~\ref{thm:main2} is essential. We construct a counterexample showing that, if $\overline{g}$ fails to be continuous, then the resulting sequence of direct strategy profiles may fail to form a convergent sequence of approximate symmetric equilibria---specifically, the associated approximation errors $\{\varepsilon_n\}_{n \in \mathbb{Z}_+}$ may not converge to zero as $n \to \infty$.

\begin{exam}\label{exam-Approximate}
\rm
Consider a large game $G$ where the player space is the Lebesgue unit interval $(I, \mathcal{I}, \lambda)$, and all players share a common action set $A = \{0, 1\}$. Each player $i \in I$ has a payoff function defined by
$$
u_i(a, \tau) = i + \bigl(a - \tau(1)\bigr)^2,
$$
so the large game $G$ can be represented by the mapping $G(i) = u_i$ for all $i \in I$.

Let $\mathbb{Q}$ denote the set of rational numbers in $\mathbb{R}$, and define a randomized strategy profile $g \colon I \to \cM(A)$ as follows:
\[
g(i) =
\begin{cases}
\delta_1 & \text{if } i \in \mathbb{Q} \cap [0,1], \\
i \delta_0 + (1 - i) \delta_1 & \text{if } i \in [0,1] \setminus \mathbb{Q}.
\end{cases}
\]

Since all players have distinct payoff functions, $g$ is a symmetric strategy profile. The associated societal summary is $s(g) = \int_I g(i) \, d\lambda(i) = \tfrac{1}{2} \delta_0 + \tfrac{1}{2} \delta_1.$ It follows that $u_i\bigl(0, s(g)\bigr) = u_i\bigl(1, s(g)\bigr)$ for every $i \in I$, so $g$ is a Nash equilibrium. Thus, $g$ is a symmetric Nash equilibrium, and its symmetrized profile $\widetilde{g}$ coincides with $g$, i.e., $g = \widetilde{g} = \overline{g} \circ G$.

Observe that the characteristic mapping $G \colon I \to \cU_A$ is everywhere continuous, while the strategy profile $g \colon I \to \cM(A)$ is nowhere continuous. Hence, the auxiliary mapping $\overline{g} \colon \cU_A \to \cM(A)$ is also nowhere continuous.

Next, define a sequence of finite-player games $\{G_n\}_{n \in \mathbb{Z}_+}$ that converges to $G$. For each $n \in \mathbb{Z}_+$, let the player set be $I_n = \{k/n : k = 1, \ldots, n\}$, equipped with the uniform measure $\lambda_n(i) = 1/n$ for all $i \in I_n$. Each player has the same action set $A = \{0, 1\}$, and for each $i \in I_n$, define the payoff function by $u_i^n(a, \tau) = u_i(a, \tau)$ for all $a \in A$ and $\tau \in \cM(A)$, so $G_n(i) = u_i$ for all $i \in I_n$.

For each $n$, the direct strategy profile induced by $\overline{g}$ is given by $g^n(i) = \overline{g}\bigl(G_n(i)\bigr) = \overline{g}\bigl(G(i)\bigr) = g(i)$. Since all elements of $I_n$ are rational numbers, we have $g^n(i) = \delta_1$ for every $i \in I_n$. However, this sequence of direct strategy profiles does not form a convergent sequence of approximate Nash equilibria. Indeed, for each $i \in I_n$, the best deviation from $\delta_1$ yields a payoff gain of at least $(1 - \tfrac{1}{n})^2$. Therefore, if $g^n$ is regarded as an $\varepsilon_n$-Nash equilibrium of $G_n$, it must be that $\varepsilon_n \ge (1 - \tfrac{1}{n})^2$, which does not converge to zero as $n \to \infty$.

The above reasoning shows that the continuity condition in Theorem~\ref{thm:main1} is indispensable. Moreover, since $\widetilde{g}(i)$ is a pure strategy for each $i \in I_n$, each $g^n(i)$ is likewise a pure strategy. As a result, the sequence of realizations of direct strategy profiles $\{g^n\}_{n \in \mathbb{Z}_{+}}$ also fails to form a convergent sequence of approximate Nash equilibria. This confirms that the continuity assumption is equally essential for Theorem~\ref{thm:main2}.
\end{exam}


\subsection{Direct strategy profiles in pure strategies}\label{subsec-symmetric pure}
A direct strategy profile induced by an auxiliary mapping generally involves randomized strategies. A natural question is whether such a strategy profile can be constructed in pure strategies. A notable special case arises when the symmetric equilibrium $\widetilde g$ of the large game is itself a pure strategy profile. In that case, the auxiliary mapping $\overline g$ assigns pure strategies to player characteristics, and the resulting direct strategy profile $g^n = \overline g \circ G_n$ is also composed of pure strategies.

To investigate the existence of symmetric pure strategy equilibria in large games, we adopt the atomless condition introduced by \citet{M1984}, which guarantees such equilibria exist when the action set is finite.

\begin{defn}[Atomless condition]
 	\label{defn:atomless}
     \rm
A large game $G \colon I \to \cU_A$ is said to satisfy the \emph{atomless condition} if the induced distribution $\lambda G^{-1}$ on $\cU_A$ is atomless.  
\end{defn}

\citet[Theorem~2]{M1984} shows that a pure strategy symmetric equilibrium exists in any large game with a finite action set that satisfies the atomless condition.\footnote{\cite{KS1995a, KS1995b} generalized this existence result to large games with countable action sets by establishing a pure symmetrization theorem under the atomless condition.} Building on this result and Theorem~\ref{thm:main1}, we obtain the following proposition. Throughout this subsection, we restrict attention to games with finite action sets.

\begin{prop}
	\label{prop:pure}
	Given a sequence of finite game $\{G_n\}_{n \in \bZ_+}$ that converges to a large game $G$ satisfying the atomless condition, let $\overline{g}$ be an auxiliary mapping of $G$ induced from a pure strategy symmetric equilibrium $\widetilde g$. Define $\{g^n\}_{n \in \bZ_{+}}$ as the sequence of direct strategy profiles induced by $\overline{g}$. If $\overline{g}$ is almost everywhere continuous on $\mathrm{supp } \lambda G^{-1}$, then there exists a sequence of real numbers $\{\varepsilon_n\}_{n \in \bZ_+}$ such that each $g^n$ is an $\varepsilon_n$-Nash equilibrium in pure strategy, and $\varepsilon_n \to 0$ as $n$ approaches infinity.
\end{prop}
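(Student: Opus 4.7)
The plan is to derive Proposition~\ref{prop:pure} as a direct specialization of Theorem~\ref{thm:main1}. The only new observation needed is that when $\widetilde g$ is itself a pure strategy equilibrium, the auxiliary mapping $\overline g$ can be chosen to take values in the set of Dirac measures $\{\delta_a : a \in A\} \subset \cM(A)$, so that every direct strategy profile $g^n = \overline g \circ G_n$ is automatically a pure strategy profile in $G_n$. Once this is established, the approximation bound $\varepsilon_n \to 0$ is inherited directly from Theorem~\ref{thm:main1}, and there is nothing further to prove.

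More concretely, I would proceed in three steps. First, since $G$ satisfies the atomless condition and $A$ is finite, \citet[Theorem~2]{M1984} supplies a pure strategy symmetric Nash equilibrium $\widetilde g$; this is the $\widetilde g$ that the proposition assumes. Second, I would observe that because $\widetilde g = \overline g \circ G$ is a pure strategy profile, for $\lambda G^{-1}$-almost every $u \in \cU_A$ the value $\overline g(u)$ is a Dirac measure; by modifying $\overline g$ on a $\lambda G^{-1}$-null subset of $\mathrm{supp}\,\lambda G^{-1}$ (which preserves the hypothesis of almost-everywhere continuity on $\mathrm{supp}\,\lambda G^{-1}$), we may arrange $\overline g(u) \in \{\delta_a : a \in A\}$ throughout $\mathrm{supp}\,\lambda G^{-1}$. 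Consequently, for each $n$ and each $i \in I_n$, the image $g^n(i) = \overline g(G_n(i))$ is a Dirac measure, so $g^n$ is a pure strategy profile in $G_n$. Third, I would invoke Theorem~\ref{thm:main1}: the hypotheses on $\{G_n\}$ and on the continuity of $\overline g$ are satisfied, so there exists a sequence $\varepsilon_n \to 0$ such that each $g^n$ is an $\varepsilon_n$-Nash equilibrium of $G_n$ in the sense of Definition~\ref{defn:approximateNE-finite}. Since $g^n$ is pure, it is a pure strategy $\varepsilon_n$-Nash equilibrium.

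The argument really has no serious obstacle; all of the heavy lifting is in Theorem~\ref{thm:main1} itself. The only point that requires a moment's care is the modification of $\overline g$ on a null set to ensure that every $g^n(i)$ lies in $\{\delta_a : a \in A\}$ rather than merely that $\lambda_n$-almost every $g^n(i)$ does so. This is handled by redefining $\overline g$ pointwise to any fixed Dirac measure on the exceptional null set (e.g., by post-composing with a measurable selector into $\{\delta_a : a \in A\}$), which is harmless because the set of Dirac measures is closed in $\cM(A)$ and disjoint from an open neighborhood of every non-Dirac value on the exceptional set. Beyond this, the proposition is immediate from Theorem~\ref{thm:main1}.
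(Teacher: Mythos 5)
Your proposal is correct and takes essentially the same route as the paper, which offers no separate proof of Proposition~\ref{prop:pure} and derives it exactly as you do: Mas-Colell's existence theorem for pure strategy symmetric equilibria under the atomless condition, followed by an application of Theorem~\ref{thm:main1} to the (Dirac-valued) direct strategy profiles. One caution about your null-set modification: since $\lambda G^{-1}$ is atomless, the finite set $G_n(I_n)$ may lie entirely inside the exceptional null set, and Step~5 of the proof of Theorem~\ref{thm:main1} uses that $g^n(i)=\overline g\bigl(G_n(i)\bigr)$ is a best response to $s(\widetilde g)$, so on that set you must redefine $\overline g(u)$ as a Dirac measure at a maximizer of $u\bigl(\cdot,s(\widetilde g)\bigr)$ rather than at an arbitrary action --- although under the natural reading in which $\overline g$ is defined directly from the symmetric pure profile by $\overline g\bigl(G(i)\bigr)=\delta_{\widetilde g(i)}$, it is already Dirac-valued at every point of $G(I)$ and no modification is needed.
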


The continuity condition on $\overline g$ remains essential in Proposition~\ref{prop:pure}. Indeed, Example~\ref{exam-Approximate} satisfies the atomless condition, yet the induced sequence of direct strategy profiles fails to converge in the absence of continuity.

Proposition~\ref{prop:pure} can also be extended to large games with more general action spaces. For example, in a large game with a countable action set, one can show---building on the results of \citet{KS1995a, KS1995b}---that if the auxiliary mapping is almost everywhere continuous, then the sequence of direct strategy profiles still forms a convergent sequence of approximate equilibria.


\subsection{Related literature}\label{subsec-literature}
Substantial progress has been made in the theory of large games since the foundational works of \citet{S1973} and \citet{M1984}. More recent contributions include \citet{Kalai2004}, \citet{Sun2006}, \citet{KRSY2013}, \citet{QY2014}, \citet{Yu2014}, \citet{DK2015}, \citet{HSS2017}, \citet{KRYZ17}, \citet{KS2018}, \citet{KRQS2020}, \citet{SSY2020}, \citet{CP2022}, \citet{CQSS2022}, \citet{Hellwig2022}, \citet{W2022}, \citet{Yang2022}, and \citet{CWX2025}. Significant advances has also been made in the analysis of large economies; see, for instance, \citet{Yannelis2009}, \citet{HY2016}, and \citet{ADKU2022a, ADKU2022b}. This section reviews literature most directly related to our analysis.

This paper builds on prior work on symmetric equilibria in large games, notably \citet{M1984}, \citet{KS1995a, KS1995b}, and \citet{SSY2020}. \citet{M1984} established the existence of pure strategy symmetric equilibria under the atomless condition and finite action sets. \citet{KS1995a, KS1995b} extended this result to games with countable action sets. More recently, \cite{SSY2020} considered randomized strategy symmetric equilibria and showed the existence of such equilibria in general large games. These results provide the theoretical foundation for our analysis. In contrast to most of this literature, which focuses on a continuum of players, we examine games with a large but finite population.

Our approach is also related to \citet{KS2018}, who studied incomplete-information repeated games with a large but finite number of players. To address analytical complexity, they introduced the notion of a (symmetric) imagined-continuum equilibrium, in which the societal summary---referred to as the ``macro-strategy''---was unaffected by individual deviations. They established the existence of a symmetric Markov equilibrium in the imagined-continuum game and showed that it induced an approximate equilibrium in the finite-player game under a Lipschitz continuity condition on the outcome-generating function. While both papers examined symmetric equilibria in large finite games, key differences emerged. \citet{KS2018} adopted symmetry as a simplifying assumption by considering homogeneous players with common payoffs and priors. This ensured that the same strategy applied in both continuum and finite-player settings, rendering the continuity condition in Theorem~\ref{thm:main1} trivially satisfied. In contrast, our model allows for heterogeneous players, rendering the continuity condition on the auxiliary mapping essential (see Example~\ref{exam-Approximate}). Moreover, whereas \citet{KS2018} assumed a finite action set and Lipschitz continuous payoffs, our framework accommodates general action spaces without these restrictions.

Theorem~\ref{thm:main2} relates to \citet{Kalai2004}, who studied the approximate ex post stability of Nash equilibria in Bayesian games with a large but finite number of players. He showed that if payoff functions satisfy certain equicontinuity conditions, then realizations of exact equilibria in the large finite game are approximately stable with high probability. The key distinction is that Theorem~\ref{thm:main2} considers a Nash equilibrium of the limiting large game $G$, which may not correspond to an exact Nash equilibrium in each finite-player game $G_n$. Moreover, as illustrated in Example~\ref{exam-Approximate}, the continuity assumption on $\overline{g}$ in Theorem~\ref{thm:main2} is essential.

Finally, the paper contributes to the literature on the convergence of equilibria in large games. Earlier work (e.g., \citet{Gr1984}, \citet{H1988}) examined whether equilibrium correspondences are upper hemicontinuous as the number of players grows. More recent developments include \citet{KRSY2013}, \citet{QY2014}, and \citet{HSS2017}. These papers focus primarily on pure strategies. We extend existing results by analyzing the convergence of randomized strategies, a technically distinct and less developed area in the study of large games.

\section{Appendix}\label{sec-appendix}

\subsection{Technical preparations}\label{subsec-technique}
Let $A$ be a compact metric space with metric $d_A$, endowed with its Borel $\sigma$-algebra $\cB(A)$. In this subsection, we introduce two equivalent metrics on the space of probability measures $\cM(A)$. Using these metrics, we show that the distance between the sequence of societal summaries and the sequence of their realizations converges to zero in probability.

\begin{itemize}
	\item Let $\rho$ denote the \emph{Prohorov metric} on $\cM(A)$. That is, for all $\tau,\widetilde \tau \in \mathcal{M}(A)$, we have
	\[\rho( \tau, \widetilde \tau )=\inf \bigl\{\epsilon>0 \colon \tau(B) \leqslant \epsilon + \widetilde \tau (B^{\epsilon}) , \widetilde \tau(B) \leqslant \epsilon + \tau (B^{\epsilon}) \text { for all  } B \in \cB(A) \bigr\},\] where
	${B^\epsilon } = \bigl\{ a \in A \colon d_A(a,b) < \epsilon {\text{ for some }} b \in B \bigr\} .$
	\item Let $\beta$ denote the \emph{dual-bounded-Lipschitz metric} on $\mathcal{M}(A)$. That is, for all   $\tau,\widetilde \tau \in \mathcal{M}(A)$, we have
	\begin{equation*}
		\beta(\tau, \widetilde \tau) = \| \tau-\widetilde \tau\|_{{B L}}=\sup \Bigl\{\bigl|\int_{A} h {\mathrm{d}}(\tau- \widetilde \tau)\bigr| \colon\|h\|_{{B} {L}} \leqslant 1 \Bigr\},
	\end{equation*}
	where $h$ is bounded continuous on $A$,
	$\|h\|_{\infty} = \sup\limits_{a \in A}|h(a)|$, $\|h\|_{\mathrm{L}}=\sup\limits_{a \neq b, a,b \in A } \dfrac{|h(a)-h(b)|}{d_A(a, b)}$,  and $\|h\|_{{BL}}  = \|h\|_{\infty}+\|h\|_{{L}}.$
\end{itemize}

\smallskip
It is known in the literature that $\rho$ and $\beta$ are equivalent metrics; see, for example, \citet[Theorem 8.3.2]{Bo2007}. Recall that given a sequence $\{g^n\}_{n \in \mathbb{Z}_+}$, where each $g^n$ is a randomized strategy profile in the game $G_n$, there exists a probability space $(\Omega, \Sigma, \mathbb{P})$ and a sequence of random variables $\{x^n_i\}_{i \in I_n,\, n \in \mathbb{Z}_+}$ mapping from $\Omega$ to $A$ such that:
\begin{itemize}
    \item[(i)] For each $i \in I_n$ and $n \in \mathbb{Z}_+$, the distribution of $x^n_i$ equals $g^n(i)$;
    \item[(ii)] For each $n \in \mathbb{Z}_+$, the random variables $\{x^n_i\}_{i \in I_n}$ are pairwise independent.
\end{itemize}

\begin{lem}\label{lem-norm}
Let $\{G_n\}_{n \in \mathbb{Z}_+}$ be a sequence of finite-player games, and let $g^n$ be a randomized strategy profile of $G_n$ for each $n \in \mathbb{Z}_+$. For each $\omega \in \Omega$, let $s(x^n)(\omega) = \sum_{i \in I_n} \delta_{x^n_i(\omega)} \lambda_n(i)$ denote the realized societal summary associated with the strategy profile $g^n$. Thus $s(x^n)$ defines a random variable from $(\Omega, \Sigma, \mathbb{P})$ to $\cM(A)$. Then,
\[
\beta\bigl(s(x^n), s(g^n)\bigr) \to 0 \quad \text{and} \quad \rho\bigl(s(x^n), s(g^n)\bigr) \to 0 \quad \text{in probability,}
\]
where $s(g^n) = \int_{I_n} g^n(i)\, \mathrm{d}\lambda_n(i)$.
\end{lem}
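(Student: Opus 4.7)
The plan is to work with the dual-bounded-Lipschitz metric $\beta$ first, since its definition as a supremum over test functions meshes well with a second-moment/Chebyshev argument, and then deduce the $\rho$-statement from the equivalence of the two metrics on $\cM(A)$ \citep[e.g.,][Theorem 8.3.2]{Bo2007}. For a fixed test function $h$ with $\|h\|_{BL}\le 1$, observe that
\[
\int_A h \,\mathrm{d}\bigl(s(x^n)-s(g^n)\bigr)
   \;=\; \frac{1}{n}\sum_{i\in I_n}\Bigl(h(x^n_i)-\mathbb{E}[h(x^n_i)]\Bigr),
\]
which is a normalized sum of pairwise independent, mean-zero random variables, each bounded by $2\|h\|_\infty\le 2$. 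By pairwise independence the variance of the sum equals the sum of variances, so Chebyshev's inequality gives a bound of order $1/n$ for the probability that the above quantity exceeds any prescribed $\eta>0$.

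The main obstacle is to upgrade this pointwise (in $h$) concentration to a bound on the supremum that defines $\beta$. The idea is a compactness/$\epsilon$-net argument. The set $\{h\in C(A):\|h\|_{BL}\le 1\}$ is uniformly bounded (since $\|h\|_\infty\le 1$) and equicontinuous (since $\|h\|_L\le 1$ means $h$ is $1$-Lipschitz on the compact metric space $A$), so by the Arzelà--Ascoli theorem it is relatively compact in $C(A)$ with the sup-norm. Hence for any $\epsilon>0$ there exists a finite collection $h_1,\ldots,h_{K_\epsilon}$ in the unit BL-ball such that every $h$ with $\|h\|_{BL}\le 1$ satisfies $\|h-h_k\|_\infty<\epsilon/4$ for some $k$. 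Then for any probability measures $\mu,\nu$,
\[
\Bigl|\int_A h\,\mathrm{d}(\mu-\nu)\Bigr|
   \;\le\;\Bigl|\int_A h_k\,\mathrm{d}(\mu-\nu)\Bigr|
          + 2\|h-h_k\|_\infty
   \;\le\;\Bigl|\int_A h_k\,\mathrm{d}(\mu-\nu)\Bigr|+\epsilon/2,
\]
and taking the supremum over $h$ yields $\beta(\mu,\nu)\le \max_{1\le k\le K_\epsilon}|\int h_k\,\mathrm{d}(\mu-\nu)|+\epsilon/2$.

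Applying this with $\mu=s(x^n)$ and $\nu=s(g^n)$ and combining with the Chebyshev bound from the first paragraph via a union bound over the finite net gives
\[
\mathbb{P}\bigl(\beta(s(x^n),s(g^n))>\epsilon\bigr)
  \;\le\;\sum_{k=1}^{K_\epsilon}\mathbb{P}\Bigl(\Bigl|\tfrac{1}{n}\sum_{i\in I_n}\bigl(h_k(x^n_i)-\mathbb{E}[h_k(x^n_i)]\bigr)\Bigr|>\epsilon/2\Bigr)
  \;\le\;\frac{4K_\epsilon}{n\,\epsilon^{2}},
\]
which tends to $0$ as $n\to\infty$. This proves $\beta(s(x^n),s(g^n))\to 0$ in probability; the corresponding statement for $\rho$ then follows immediately from the equivalence of $\beta$ and $\rho$ on $\cM(A)$.
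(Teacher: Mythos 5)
Your proposal is correct and follows essentially the same route as the paper: a Chebyshev/pairwise-independence bound for each fixed test function $h$ in the unit BL ball, an Arzel\`a--Ascoli compactness argument to reduce the supremum defining $\beta$ to a finite $\epsilon$-net with a union bound, and then the $\rho$-statement via the equivalence of the two metrics. The only cosmetic difference is that you specialize to $\lambda_n(i)=1/n$ and make the $O(1/n)$ rate explicit, whereas the paper phrases the Chebyshev step in terms of $\sup_{j}\lambda_n(j)\to 0$.
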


\begin{proof}[Proof of Lemma~\ref{lem-norm}]
	We divide the proof into two steps. In step 1, we show that  for any bounded continuous function $h \colon A \to \bR$ with $||h||_{BL} \le 1$, $\int_{A} h {\rm{d}} \bigl( {s(x^n) - s(g^n)} \bigr) \to 0$ in probability. In step 2, we show that $\beta \bigl(s(x^n), s(g^n) \bigr) \to 0$ in probability. Finally, by the equivalence of $\rho$ and $\beta$, we obtain that $\rho \bigl(s(x^n), s(g^n) \bigr) \to 0$ in probability.

\noindent\textbf{Step 1.} In this step, we prove that
	for any  bounded and continuous function $h \colon A \to \bR$ with $\|h\|_{{BL}} \leq 1$, we have
	\begin{align}\label{align-converge}
		\int_{A} h {\rm{d}} \bigl( s(x^n) - s(g^n) \bigr) = \sum\limits_{i \in I_n}\lambda_{n}(i) \Bigl(h\bigl(x^n_i \bigr) - \bE\bigl[h\bigl(x^n_i \bigr)\bigr]\Bigr)  \to 0
	\end{align}
	in probability.
Fix any $n \in \mathbb{Z}_{+}$, since $\{x^n_i\}_{ i \in I_n}$ are pairwise independent and $h$ is a bounded continuous function, we know that $\{h ( x^n_i ) \}_{ i \in I_{n} }$ are also pairwise independent. By the definition of  $\|h\|_{BL} \leq 1$, we have $\|h\|_{\infty} \leq 1$ and hence $-1 \leq h (x^n_i) \leq 1$, $ \text{var}\bigl(h ( x^n_i)\bigr) \leq 1$, for all $i \in I_n$. Moreover, by the independence of $\{h ( x^n_i) \}_{ i \in I_{n} }$, we have
	$$\text{var} \Bigl( \sum\limits_{i \in I_n} h (x^n_i)\lambda_n(i) \Bigr) = \sum\limits_{i \in I_n} \bigl(\lambda_n(i) \bigr)^{2} \text{var} \bigl( h(x^n_i) \bigr).$$
   Since $\bE \bigl[ \sum\limits_{i \in I_n}\lambda_n(i) h (x^n_i) \bigr] = \sum\limits_{i \in I_n}\lambda_{n}(i)  \bE \bigl[ h(x^n_i )\bigr]$, for any $\varepsilon > 0$, we have
	\begin{align}\label{Lemma-Chebyshev}
			&   \mathbb P \Bigl(
			\Bigl|\sum_{i \in I_n}\lambda_n(i) \Bigl( h\bigl( x^n_i\bigr)-\bE \bigl[ h \bigl(x^n_i\bigr) \big] \Bigr)
			\Bigr| \leq \varepsilon \Bigr) \notag \\
			&\geqslant 1 - \frac{\sum\limits_{i \in I_n}  \bigl(\lambda_n(i)\bigr)^2 \text{var} \Bigl( h \bigl(x^n_i\bigr)  \Bigr)}{\varepsilon^2} \notag \\
			&\geqslant 1 -  \frac{\sum\limits_{i \in I_n} \bigl( \lambda_n(i)\bigr)^2  }{\varepsilon^2} \notag \\
			&\geqslant 1 -  \frac{\mathop {\sup }\limits_{j \in I_n} \lambda_n(j)\sum\limits_{i \in I_n} {\lambda_n(i)}  }{\varepsilon^2} \notag \\
			&\geqslant 1 -  \frac{\mathop {\sup }\limits_{j \in I_n} \lambda_n(j)  }{\varepsilon^2},
	\end{align}
	where the first inequality is due to the Chebyshev's inequality, and the last inequality follows from the fact that $\sum_{i \in I_n} {\lambda_{n}(i)} = 1 $. Combining with   $\sup_{j \in I_{n}} \lambda_{n}(j) \to 0 $ as $n \to \infty$, we can finish the proof of Formula~(\ref{align-converge}).
\smallskip

\noindent\textbf{Step 2.} In this step, we prove  that $\beta\bigl(s(x^n), s(g^n)\bigr) \to 0$ in probability. According to the proof in step 1, we know that for any finite number $m$, and a sequence of bounded continuous functions $\{ {p_{l}}\} _{l = 1}^m$ with $\|p_l\|_{BL} \leq 1$ for all $1 \le l \le m$,  we have
	\begin{align}\label{align-converge n}
		\sum\limits_{i \in I_n}\lambda_{n}(i) \Bigl( p_l (x^n_i) - \bE \bigl[ p_l( x^n_i) \bigr] \Bigr) \rightarrow 0
	\end{align}
	uniformly in probability for $l \in \{ 1,2,3, \cdots,m \}$.
	
	Let $ E = \{ h\colon \|h\|_{BL}\leq 1 \}$ be a compact space of bounded continuous functions. Given any $\varepsilon > 0$, there exists a finite number $m(\varepsilon)$, and a set of functions denoted by $\{ h_l  \}_{l = 1}^{m(\varepsilon)}$ such that
	\begin{itemize}
		\item[(i)] $h_1,h_2,...,h_{m(\varepsilon)} \in  E$,
		\item[(ii)] for any $h \in E$,  $  \inf\limits_{1 \le l \le m(\varepsilon)   } \sup\limits_{a \in A} \bigl| {h(a) - h_{l}(a)} \bigr| < \varepsilon $.
	\end{itemize}

	For any $h \in E$, we have
	\begin{align}\label{inequa-converge in prob}
			& \quad \Bigl|\int_{A} h \mathrm{d}\bigl(s(x^n)-s(g^n) \bigr)\Bigr|  \notag \\
			&\le   \inf\limits_{1 \le l  \le m(\varepsilon)} \left\{
			\Bigl| \int_{A} h_{l} \mathrm{d} \bigl(s(x^n)-s(g^n)\bigr)\Bigr| +
			\Bigl|\int_{A} (h - h_l) \mathrm{d}\bigl(s(x^n)-s(g^n)\bigr) \Bigr| \right\} \notag \\
			&\le \sup\limits_{1 \le l  \le m(\varepsilon)}
			\Bigl|\int_{A} h_{l} \mathrm{d} \bigl(s(x^n)-s(g^n)\bigr)\Bigr| + \inf\limits_{1 \le l  \le m(\varepsilon)}
			\Bigl|\int_{A} (h - h_{l}) \mathrm{d} \bigl(s(x^n)-s(g^n)\bigr) \Bigr| \notag \\
			& \le  \sup\limits_{1 \le l  \le m(\varepsilon)} \Bigl|\int_{A} h_{l} \mathrm{d} \bigl(s(x^n)-s(g^n)\bigr) \Bigr| + 2 \varepsilon,
	\end{align}
	where the first inequality follows from the triangle inequality, and the last inequality follows by   $  \inf_{1 \le l \le m(\varepsilon)   } \sup_{a \in A} | h(a) - h_l(a) | < \varepsilon $. Therefore,
	\[
	\sup\limits_{h \in E}
	\Bigl|\int_{A} h \mathrm{d} \bigl(s(x^n)-s(g^n) \bigr) \Bigr|
	\leq
	\sup\limits_{1 \le l \le m(\varepsilon) } \Bigl| {\int_{A} { h_{l} \mathrm{d} \bigl(s(x^n) - s(g^n) \bigr)} } \Bigr| + 2\varepsilon.
	\]
	To finish the proof of $\beta (s(x^n), s(g^n) ) \to 0$ in probability, it suffices to show that for any $\eta > 0$,
	\[
	\lim\limits_{n \rightarrow \infty} \mathbb P \Bigl(
	\beta \bigl( s(x^n), s(g^n) \bigr) \ge \eta
	\Bigr) =  \lim\limits_{n \rightarrow \infty} \mathbb P \Bigl(
	\sup\limits_{h \in E} \Bigl|\int_A h \mathrm{d} \bigl(s(x^n) - s(g^n) \bigr)\Bigr|
	\geq \eta
	\Bigr)
	= 0.\]
	Pick an $\varepsilon$ such that $0 < \varepsilon < \frac{\eta}{2}$, then we only need to show that
	\[
	\lim\limits_{n \rightarrow \infty}
	\mathbb P \Bigl(
	\sup\limits_{ 1 \le l \le m(\varepsilon) } \Bigl| \int_A h_l \mathrm{d} \bigl(s(x^n) - s(g^n) \bigr)  \Bigr| \geq  - 2\varepsilon + \eta
	\Bigr)
	= 0. \]
	Since
	\[ \mathbb P \Bigl(
	\sup\limits_{ 1 \le l \le m(\varepsilon) } \Bigl| \int_A h_l\mathrm{d} \bigl(s(x^n) - s(g^n) \bigr)  \Bigr| \geq  - 2\varepsilon + \eta
	\Bigr)
	\le
	\sum_{  l = 1  }^{ m(\varepsilon)}  \mathbb P \Bigl(
	\Bigl| \int_A h_l \mathrm{d} \bigl(s(x^n) - s(g^n)\bigr) \Bigr| \geq - 2\varepsilon +\eta
	\Bigr),
	\]
and by using Formula~(\ref{align-converge n}), we conclude that
	$\mathop {\sup }\limits_{h \in E} \Bigl|\int_A h \text{d} \bigl( s(x^n) - s(g^n)\bigr) \Bigr| \to 0$ in probability.
\end{proof}

\subsection{Proof of Lemma~\ref{lem:auxiliary}}\label{subsec-proof auxiliary}
Let $s(G,g) = \int_I \delta_{G(i)} \otimes g(i) \rmd \lambda(i)$ be the joint distribution of $G$ and $g$ on the product space $\cU_A \times A$. Clearly, the marginal distribution $s(G,g)|_{\cU_A} = \lambda G^{-1}$. Since both $\cU_A$ and $A$ are Polish spaces, there exists a family of Borel probability measures $\{\cS(u,\cdot) \}_{u \in \cU_A}$ in $\cM(A)$, which represents the disintegration of $s(G,g)$ with respect to $\lambda G^{-1}$ on $\cU_A$. Let $\overline g \colon  \cU_A \to \cM(A)$ be defined such that $\overline g  (u)  = \cS(u)$ for all  $u \in \cU_A$. According to \citet[Lemma 5]{SSY2020}, the composition mapping $\widetilde{g} = \overline{g}\circ G$ is a Nash equilibrium of $G$ that satisfies $s(\widetilde{g}) = s(g) = \tau$. Furthermore, we can ensure that for any player $i \in I$, her strategy $\widetilde{g}(i)$ is a best response with respect to the society summary  $s(\widetilde{g})$. This requirement can be met by modifying the strategies of a subset of players with measure zero.

\subsection{Proof of Theorem~\ref{thm:main1}}\label{subsec-proof lower}

The major difficulty of this proof is to estimate the difference between $u_i^n(g^n)$ and $u_i^n(\mu, g_{-i}^n)$ for all $\mu \in \cM(A)$, which we achieve by dividing the estimation into five steps: Step 1 restricts attention to uniformly bounded and equicontinuous payoff functions; Step 2 bounds the difference between $u_i^n(\mu, g_{-i}^n)$ and $u_i^n\bigl(\mu, s(\mu, g_{-i}^n)\bigr)$ for all $\mu \in \cM(A)$; Step 3 proves $\rho(s(g^n), s(\widetilde{g})) \to 0$ and combines this with Step 2 to bound the gap between $u_i^n(g^n)$ and $u_i^n\bigl(g_i^n, s(\widetilde{g})\bigr)$; Step 4 bounds the difference between $u_i^n(\mu, g_{-i}^n)$ and $u_i^n\bigl(\mu, s(\widetilde{g})\bigr)$ for all $\mu \in \cM(A)$ by leveraging Step 3 and the fact that $\rho(s(\mu, g_{-i}^n), s(\widetilde{g})) \to 0$; and finally, Step 5 uses the triangle inequality, the equilibrium property (since $s(\widetilde{g}) =  \tau^* = s(g)$, $g$ is an equilibrium of $G$ and the construction of $g^n$ ensures  the existence of $i'$ such that $g^n(i) = g(i')$, $u_i^n = u_{i'}$, hence $u_i^n(g^n(i), s(\widetilde{g})) \ge u_i^n(\mu, s(\widetilde{g}))$ for any $\mu \in \cM(A_i)$), and the results from Steps 3 -- 4 to complete the proof.

\smallskip

\noindent\textbf{Step 1.} We show that for any $\varepsilon > 0$, there exist a  sequence of subsets  $S_n \subseteq I_n$  such that $ \lambda_{n}(S_n) > 1- \frac{\varepsilon}{2}$ for all $n \in \mathbb{Z}_+$, and $\{ u_i^n  \}_{ i \in S_n, n \in \mathbb{Z}_+}$ are equicontinuous and uniformly bounded by a constant $M_\varepsilon$. For simplicity, let $ \mathcal{W}^{n} = \lambda_n  {G_n}^{-1} $, and $\mathcal{W} = \lambda  G^{-1}$. Since $A \times \cM(A)$ is a compact metric space, the space of bounded and  continuous functions $\cU_A$ on $A \times \cM(A)$ is a Polish space. By using the Prohorov theorem (\citet[Theorem 5.2]{B1999}), we know that $\{ \mathcal{W}^n \}_{n \in \bZ_{+}}$ is tight, which means that for any $\varepsilon > 0$, there exists a compact set $K_{\varepsilon} \subset \cU_A$ such that $\mathcal{W}^{n}(K_{\varepsilon}) > 1- \frac{\varepsilon}{2}   $ for all $n \in \bZ_{+} $.

Since $K_{\varepsilon}$ is a compact set that consists of bounded and continuous functions, the Arzelà-Ascoli theorem (\citet[Theorem 45.4]{Mu2000}) implies that all the functions in $K_{\varepsilon}$ are  uniformly bounded and equicontinuous functions. Let $M_\varepsilon$ denote a bound of all the functions in $K_{\varepsilon}$, and $S_n = \{ i \in I_n |  u_i^n \in K_{\varepsilon}  \}$ for all $n \in \mathbb{Z}_+$. It is clear that $\lambda_n(S_n ) > 1  - \frac{\varepsilon}{2}$.

\smallskip

\noindent\textbf{Step 2.}  We estimate $|u_i^n( \mu, g^n_{-i}) - u^n_i ( \mu , s(\mu, g^n_{-i}) )|$ in this step, where  $  u^n_i ( \mu , s(\mu, g^n_{-i}) ) = \int_{A_i} u^n_i ( a_i , \lambda_n(i) \mu + \sum_{j \in I_n \backslash \{i\} } \lambda_n(j) g^n (j) ) \mu(\rmd a_i)$. To be precise, we prove that for any $\varepsilon > 0$ and any sequence of randomized strategy profiles $\{g^n\}_{n \in \mathbb{Z}_+}$, there exists $ N_{\varepsilon} \in \mathbb Z_{+}$ such that for all $n \ge N_{\varepsilon}$, $i \in S_n$,  $\mu \in \cM(A)$, we have
			\[
			\left|	u_i^n( \mu, g^n_{-i}) - u^n_i \Bigl( \mu , \lambda_n(i) \mu + \sum\limits_{j \in I_n \backslash \{i\} } \lambda_n(j) g^n (j) \Bigr) \right| \le \frac{\varepsilon}{4}.
			\]

Similar to the proof of Lemma~\ref{lem-norm}, we represent the  strategies of all players in the sequence of games $\{G_n\}_{n \in \bZ_+}$  using random variables $\{ x^n_i\}_{i \in I_n, n \in \bZ_{+}}$, where each  $\{ x^n_i(\omega)\}_{i \in I_n, n \in \bZ_{+}}$  is a realization of a corresponding random variable  $ \{g_i^n\}_{i \in I_n}$. 
	Let $ x_{\mu}$ be a random variable that induces the distribution $\mu$, then we have
$$
u_i^n( \mu, g^n_{-i}) = \bE \Bigl[   u_i^n \Bigl(   x_{\mu} , \lambda_n(i) x_{\mu} +  \sum\limits_{j \in I_n \backslash \{i\} } \lambda_n(j)x^n_j       \Bigr) \Bigr],$$
and
$$u^n_i \Bigl( \mu , \lambda_n(i) \mu +  \sum\limits_{j \in I_n \backslash \{i\}} \lambda_n(j)g^n (j) \Bigr)
=
\bE \Bigl[   u_i^n \Bigl(   x_{\mu},  \lambda_n(i) \mu +  \sum\limits_{j \in I_n \backslash \{i\}} \lambda_n(j)g^n (j)      \Bigr) \Bigr].$$
Hereafter, we restrict our attention to functions  $ u, u_i^n \in K_{\varepsilon}$.
By uniformly equicontinuity, we know that for any $\varepsilon > 0$, there exists $\eta>0$ such that for any $ \tau, \widetilde{\tau} \in \cM(A)$  with $\rho( \tau, \widetilde \tau) \le \eta$, 
\begin{equation}\label{equa-payoff equi}
	\begin{split}
		\bigl|  u ( a , \tau )   - u ( a , \widetilde \tau)              \bigr|
		\le
		\frac{\varepsilon}{4(2M_\varepsilon+1)}.
	\end{split}
\end{equation}

Let $s (x_{\mu}, x^n_{-i})   (\omega) =  \lambda_n(i) \delta_{x_{\mu}(\omega)} +  \sum_{j \in I_n \backslash \{i\}   } \lambda_{n}(j)  \delta _{
	x^n_j(\omega)}  $, and $s (\mu, g^n_{-i})    =  \lambda_n(i) \mu +  \sum_{j \in I_n \backslash \{i\}  } \\ \lambda_{n}(j) g^n(j)  $, for all $\omega \in \Omega$, $\mu \in \cM(A)$.
The triangle inequality implies that,
\begin{align}\label{align-metric triangle}
	\rho \bigl( s (\mu, g^n_{-i})   ,  s (x_{\mu}, x^n_{-i})   (\omega)  \bigr)   \le  &\rho \bigl(  s(g^n) ,  s (\mu, g^n_{-i})    \bigr) \notag \\
	 & + \rho \bigl(   s(x^n)(\omega),  s(g^n)  \bigr) \notag  \\
&	  	+
	\rho \bigl(  s(x^n)(\omega)  ,  s (x_{\mu}, x^n_{-i})   (\omega)   \bigr).
\end{align}
By the definition of Prohorov metric $\rho$, we know that for any $\omega \in \Omega$, $\mu \in \cM(A)$, $i \in I_n$,
\begin{align*}
	\rho \bigl( s(x^n)(\omega)   ,  s (x_{\mu}, x^n_{-i})   (\omega)    \bigr) \le  \sup\limits_{j \in I_{n}} \lambda_{n}(j).
\end{align*}
Since $\sup\limits_{j \in I_{n}} \lambda_{n}(j) \rightarrow 0 $, there exists $N_1 \in \mathbb{Z}_+$ such that for any $ n \ge N_1$, we have $\mathop{\sup}\limits_{ j\in I_{n}} \lambda_n(j) < \frac{\eta}{4} $. Hence for any $n \ge N_1$, $i \in I_n$, $\mu \in \cM(A)$, $\omega \in \Omega$,
\begin{align}\label{equa-metric gap 1}
	\rho \bigl( s(x^n)(\omega)   ,  s(x_{\mu}, x^n_{-i})   (\omega)    \bigr)  < \frac{\eta}{4}.
\end{align}
By the same argument as above, we can see that for any $n \ge N_1$, $i \in I_n$, $\mu \in \cM(A)$,
\begin{align}\label{equa-metric gap 2}
	\rho \bigl(  s(g^n) ,   s (\mu, g^n_{-i}) \bigr)   < \frac{\eta}{4}.
\end{align}

Let  $\Omega_1^{(\frac{\eta}{2},n)} = \bigl\{   \omega \in \Omega  \, | \,  \rho \bigl(  s(x^n)(\omega),  s(g^n)  \bigr) 	 < \frac{\eta}{2}    \} $ and $\Omega_2^{(\frac{\eta}{2},n)} = \Omega \backslash \Omega_1^{(\frac{\eta}{2},n)}$.
By Lemma~\ref{lem-norm}, for any $\varepsilon > 0 $, there exists $N_{\varepsilon} \ge N_1$ such that for any $n \ge N_{\varepsilon}$,
\begin{align}\label{equa-payoff gap 1}
	\mathbb P \Bigl( \Omega_2^{(\frac{\eta}{2},n)} \Bigr)   \le   \frac{\varepsilon }{4(2M_\varepsilon + 1)}.
\end{align}
Let
$$ H_1^{(\frac{\eta}{2}, n)} =
\left| \bE \Bigl[ \Bigl( u_i^n \bigl(   x_{\mu} ,   s (\mu, g^n_{-i})    \bigr)
-
u^n_i \bigl( x_{\mu} , s(x_{\mu},x^n_{-i})  \bigr) \Bigr) \delta_{ \Omega_1^{(\frac{\eta}{2},n)}  }  \Bigr]
\right|,$$
and
$$ H_2^{(\frac{\eta}{2}, n)}
=
\left| \bE \Bigl[ \Bigl( u_i^n \bigl(   x_{\mu} ,   s (\mu, g^n_{-i})   \bigr)
-
u^n_i \bigl( x_{\mu} , s(x_{\mu},x^n_{-i}) \bigr) \Bigr) \delta_{ \Omega_2^{(\frac{\eta}{2},n)}  }  \Bigr]
\right|,$$
By using the triangle inequality, we have
\begin{equation*}
	\begin{split}
		\left| \bE \Bigl[  u_i^n \bigl(   x_{\mu} ,   s (\mu, g^n_{-i})    \bigr)
		-
		u^n_i \bigl( x_{\mu} , s( x_{\mu},x^n_{-i}) \bigr)   \Bigr]
		\right|
		\leq H_1^{(\frac{\eta}{2}, n)} + H_2^{(\frac{\eta}{2}, n)}.
	\end{split}
\end{equation*}

Then we estimate $H_1^{(\frac{\eta}{2}, n)}$ and $H_2^{(\frac{\eta}{2}, n)}$ separately.  Note that for any $n \ge N_{\varepsilon}$ and any player $i \in S_n$, we have $u_i^n \in K_{\varepsilon}$.
\begin{itemize}
	\item[(i)] By the definition of event $\Omega_1^{(\frac{\eta}{2}, n)}$ and  Inequalities~(\ref{equa-payoff equi}), (\ref{align-metric triangle}), (\ref{equa-metric gap 1}), and (\ref{equa-metric gap 2}), we can see that
	\begin{align}\label{equa-payoff-H1}
		H_1^{(\frac{\eta}{2}, n)} \le \frac{\varepsilon }{4(2M_\varepsilon + 1)}.
	\end{align}	
	\item[(ii)] Since $u_i^n$ is bounded by $M_\varepsilon$, combined with Inequality~(\ref{equa-payoff gap 1}) we have
	\begin{align}\label{equa-payoff-H2}
		H_2^{(\frac{\eta}{2}, n)} \le  2M_\varepsilon\frac{\varepsilon }{4(2M_\varepsilon + 1)}.
	\end{align}
\end{itemize}
Combining Inequalities (\ref{equa-payoff-H1}) and (\ref{equa-payoff-H2}), for any $n \ge N_{\varepsilon}$, we have
\begin{align*}
	\left|
	\bE \Bigl[u^n_i \bigl( x_{\mu}, \lambda_n(i) \mu + \sum\limits_{j \in I_n \backslash \{i\}} \lambda_n(j)g^n(j) \bigr)\Bigr] - \bE \Bigl[u^n_i \bigl( x_{\mu} , \lambda_n(i) \delta_{ x_{\mu} } + \sum\limits_{j  \in I_n \backslash \{i\}} \lambda_{n}(j)\delta_{ x^n_j } \bigr)\Bigr]
	\right|
	\le \frac{\varepsilon}{4}.
\end{align*}
That is,
$$
\left|	u_i^n( \mu, g^n_{-i}) - u^n_i \Bigl( \mu , \lambda_n(i) \mu + \sum\limits_{j \in I_n \backslash \{i\} } \lambda_n(j) g^n (j) \Bigr) \right| \le \frac{\varepsilon}{4}.
$$

\smallskip

\noindent\textbf{Step 3.} In this step, we estimate  $\Bigl|u_i^n(g^n) - u_i^n \bigl(g^n_i, s(\widetilde g)\bigr)\Bigr|$. Similarly, let $\{ x^n_i \}_{i \in I_n, n \in \bZ_{+}}$ represent all the players' strategies in games $\{G_n\}_{n \in \bZ_+}$. The payoff of player $i \in I_n$ in game $G_n$ with the strategy profile $g^n$ can be rewritten as
\[
u_i^n(g^n) = \bE \Bigl[  u_i^n \bigl( x^n_i,   \sum\limits_{j \in I_n} \lambda_n(j) \delta_{x^n_j}   \bigr)    \Bigr].
\]
Since there exists some player $i' \in I$ such that $G(i') = G_n(i)$ (i.e., $u_{i'} = u^n_i$), the payoff of player $i'$ in large game $G$  with strategy profile $\widetilde g$ is equivalent to
\[
\int_A u_{i'} \bigl(  a, s(\widetilde g) \bigr)  \widetilde g( i', \rmd a) = \int_A u_i^n \bigl(  a, s(\widetilde g) \bigr)  g^n( i, \rmd a) = \bE \bigl[  u_i^n \bigl( x^n_i,   s(\widetilde g) \bigr)    \bigr].
\]
By the triangle inequality, we have
\begin{align*}
	&   \left|  \bE \Bigl[  u_i^n \bigl( x^n_i,   \sum\limits_{j \in I_n} \lambda_n(j) \delta_{x^n_j}   \bigr)    \Bigr]
	-
	\bE \Bigl[  u_i^n \bigl( x^n_i,   s(\widetilde g) \bigr)    \Bigr]  \right|   \\
	&   \le
	\left| \bE \Bigl[  u_i^n \bigl( x^n_i,   \sum\limits_{j \in I_n} \lambda_n(j) \delta_{x^n_j}   \bigr)    \Bigr] -  \bE \Bigl[  u_i^n \bigl( x^n_i,   \sum\limits_{j \in I} \lambda_n(j) g^n(j)   \bigr)    \Bigr]  \right|    \tag{i}   \\
	&  +    \left|  \bE \Bigl[  u_i^n \bigl( x^n_i,   \sum\limits_{j \in I_n} \lambda_n(j) g^n(j)   \bigr)    \Bigr] -  \bE \Bigl[  u_i^n \bigl( x^n_i,   s(\widetilde g) \bigr)    \Bigr]  \right| .    \tag{ii}
\end{align*}

\noindent Estimation of (i). By step 2, for any $\varepsilon > 0$, there exists a sequence of sets $\{S_n\}_{n \in \mathbb{Z}_+}$ such that $\lambda_n(  S_n  ) > 1 - \frac{\varepsilon}{2}$ for all $n \in \mathbb{Z}_+$, and a number $ N_{\varepsilon} \in \mathbb{Z}_+$ such that for all $n \ge  N_{\varepsilon}$, $i \in S_n$, we have $\Bigl|u_i^n(g^n) - u_i^n\bigl(g^n(i),\sum_{j \in I_n} \lambda_n(j) g^n(j)\bigr)\Bigr| < \frac{\varepsilon}{4}$. That is, 
$$\left| \bE \Bigl[  u_i^n \bigl( x^n_i,   \sum\limits_{j \in I_n} \lambda_n(j) \delta_{x^n_j}   \bigr)    \Bigr] -  \bE \Bigl[  u_i^n \bigl( x^n_i,   \sum\limits_{j \in I} \lambda_n(j) g^n(j)   \bigr)    \Bigr]  \right|  <  \frac{\varepsilon}{4}.$$

\noindent Estimation of (ii). Since all the functions in $\{u_i^n\}_{i \in S_n, n \in \mathbb{Z}_+}$ are uniformly bounded by $M_\varepsilon$ and equicontinuous, for the given $\varepsilon > 0$, there exists $\eta > 0$ such that for any $\tau, \widetilde \tau \in \cM(A)$ with $\rho(\tau, \widetilde \tau) < \eta$, we have  $|u(a,\tau) - u(a, \widetilde \tau)| < \frac{\varepsilon}{4( 2M_\varepsilon  + 1)}$, for all $a \in A, u \in \{u_i^n\}_{i \in S_n, n \in \mathbb{Z}_+}$. For any bounded continuous function $h  \colon A \to R$, let  $\phi(u) = \int_{A} h(a)  \overline g ( u , \rmd a)$ for all $u \in \mathrm{supp } \lambda G^{-1} \subset \cU_A$. Since $\overline g$ is continuous for $\lambda G^{-1}$-almost all $u \in \mathrm{supp } \lambda G^{-1}$, we know that $\phi$ is bounded and continuous for $\lambda G^{-1}$-almost all $u \in \mathrm{supp } \lambda G^{-1}$.  Moreover, as $\lambda_n {G_n}^{-1}$ converges weakly to $\lambda G^{-1}$ and $\mathrm{supp } \lambda_n {G_n}^{-1} \subset \mathrm{supp } \lambda G^{-1}$, by Portmanteau Theorem (\citet[Theorem 13.16]{Kl2014}), we have that 
$$\int_{\mathrm{supp } \lambda G^{-1}} \phi(u) \rmd \lambda_n{G_n}^{-1} (u) \to \int_{\mathrm{supp } \lambda G^{-1}} \phi(u) \rmd \lambda G^{-1} (u).$$
By changing of variables, we can see that $$\int_{\mathrm{supp } \lambda G^{-1}} \phi(u) \rmd \lambda_n{G_n}^{-1} (u) =
\int_{I_n} \phi \bigl( G_n(i) \bigr) \rmd  \lambda_n (i)
,$$
and
$$\int_{\mathrm{supp } \lambda G^{-1}} \phi(u) \rmd \lambda G^{-1} (u)
=
\int_I  \phi \bigl( G(i) \bigr)  \rmd  \lambda (i).
$$
According to the definitions of $\overline g$, $\widetilde g$, and $g^n$, we know that
$$ \phi \bigl( G_n(i) \bigr) = \int_A h(a) \overline g \bigl( G_n(i), \rmd a \bigr)  = \int_A h(a)  g^n(i, \rmd a),$$
and
$$ \phi \bigl( G(i) \bigr) = \int_A h(a) \overline g \bigl( G(i), \rmd a \bigr)  = \int_A h(a)  \widetilde g(i, \rmd a).$$
Hence for any continuous function $h$, we have 
$$\int_{I_n} \int_A h(a)  g^n(i, \rmd a)  \rmd \lambda_n(i) \to \int_I \int_A h(a) \widetilde g(i, \rmd a) \lambda (i),$$
which is equivalent to 
$$\int_A  h(a) s(g^n)(\rmd a) \to \int_A  h(a) s(\widetilde g)(\rmd a)$$
by changing of variables. Therefore, $s(g^n)$ converges weakly to $s(\widetilde g)$, and there exists $\widetilde N_1 \in \bZ_{+}$  such that for all $n \ge \widetilde N_1$,
$\rho\bigl( s(g^n) , s(\widetilde g) \bigr) < \eta$.  Thus for all $n \ge \widetilde N_1$, $i \in S_n$, we have
\[
\left|  \bE \Bigl[  u_i^n \bigl( x^n_i,   \sum\limits_{j \in I_n} \lambda_n(j) g^n(j)    \bigr)    \Bigr] -  \bE \Bigl[  u_i^n \bigl( x^n_i,   s(\widetilde g)  \bigr)    \Bigr]  \right|
\le
\dfrac{\varepsilon}{4( 2M_\varepsilon + 1  )}.
\]
Combine the estimations of part (i) and part (ii) above, we have that
\[
\left|  \bE \Bigl[  u_i^n \Bigl( x^n_i,   \sum\limits_{j \in I_n} \lambda_n(j) \delta_{x^n_j}   \Bigr)    \Bigr]
-
\bE \Bigl[  u_i^n \bigl( x^n_i,   s(\widetilde g) \bigr)    \Bigr]  \right|  \le \frac{\varepsilon}{4} + \dfrac{\varepsilon}{4( 2M_\varepsilon + 1  )},
\]
for all $n \ge \max\{  \widetilde N_1,  N_{\varepsilon}  \}$, $i \in S_n$. Therefore, for all $n \ge \max\{  \widetilde N_1,  N_{\varepsilon}  \}$, $i \in S_n$, we have
\[
\Bigl| u_i^n(g^n) - \int_{A} u_i^n \bigl( a, s(\widetilde g) \bigr) g^n(i, \rmd a) \Bigr| \le \frac{\varepsilon}{4} + \dfrac{\varepsilon}{4( 2M_\varepsilon + 1  )}.
\]
That is, 
$$
\Bigl|u_i^n(g^n) - u_i^n \bigl(g^n_i, s(\widetilde g)\bigr)\Bigr| \le \frac{\varepsilon}{4} + \dfrac{\varepsilon}{4( 2M_\varepsilon + 1  )}.
$$

\smallskip

\noindent \textbf{Step 4.} We estimate $\Bigl|u_i^n ( \mu, g_{-i}^n) - u_i^n \bigl(\mu, s(\widetilde g)\bigr)\Bigr|$ in this step. By using the triangle inequality, we have that
\begin{align*}
	&   \left|  \bE \Bigl[  u_i^n \bigl(  x_{\mu},   \lambda_n(i) \delta_{ x_{\mu}} + \sum\limits_{j \in I_n \backslash \{i\} } \lambda_n(j) \delta_{x^n_i}   \bigr)    \Bigr] -  \bE \Bigl[  u_i^n \bigl( x_{\mu},   s(\widetilde g) \bigr)    \Bigr]  \right|   \\
	&   \le
	\left|  \bE \Bigl[  u_i^n \bigl( x_{\mu},   \lambda_n(i) \delta_{ x_{\mu}} + \sum\limits_{j \in I_n \backslash \{i\} } \lambda_n(j) \delta_{x^n_j}   \bigr)    \Bigr] -  \bE \Bigl[  u_i^n \bigl(  x_{\mu},   \lambda_n(i) \mu + \sum\limits_{j \in I_n \backslash \{i\} } \lambda_n(j) g^n(j)   \bigr)     \Bigr]  \right|    \tag{I}   \\
	& + \left|   \bE \Bigl[  u_i^n \bigl(  x_{\mu},   \lambda_n(i) \mu + \sum\limits_{j \in I_n \backslash \{i\} } \lambda_n(j) g^n(j)   \bigr)     \Bigr]   - \bE \Bigl[  u_i^n \bigl( x_{\mu},   \sum\limits_{j \in I_n} \lambda_n(j) g^n(j)    \bigr)    \Bigr]  \right|    \tag{II}    \\
	& +    \left|  \bE \Bigl[  u_i^n \bigl( x_{\mu},   \sum\limits_{j \in I_n} \lambda_n(j) g^n(j)    \bigr)    \Bigr] -  \bE \Bigl[  u_i^n \bigl(  x_{\mu},   s(\widetilde g)  \bigr)    \Bigr]  \right|     \tag{III}
\end{align*}
for all $\mu \in \cM(A)$, where $ x_{\mu}$ is the random variable which induces the distribution $\mu$.
The estimation of part (I) is the same as part (i) in step 3, while the estimation of part (III) is the same as part (ii) in step 3.  Hence there exists $\widetilde  N_2 \in \bZ_{+} $ such that for all $n \ge \widetilde N_2$, $i \in S_n$, $\mu \in \cM(A)$,
\[
\text{(I) } 
+
\text{ (III) }
\le
\frac{\varepsilon}{4} + \dfrac{\varepsilon}{4( 2M_\varepsilon + 1  )}. 	
\]

Below we estimate part (II). By the definition of the Prohorov metric $\rho$, we have that for any  $\mu \in \cM(A)$, $i \in I_n$,
\begin{align*}
	\rho \bigl(  s(g^n) ,   s(\mu, g^n_{-i})  \bigr)  \le  \sup\limits_{j \in I_{n}} \lambda_{n}(j).
\end{align*}
Since $\sup_{j \in I_{n}} \lambda_{n}(j) \rightarrow 0 $, there exists $\widetilde N_3 \in \mathbb{Z}_{+}$ such that for any $ n \ge \widetilde N_3$, $\sup_{ j\in I_n} \lambda_n(j) < \eta $. Thus, for any $n \ge \widetilde N_3$, $i \in I_n$, $\mu \in \cM(A)$,
$
\rho \bigl(  s(g^n) ,   s (\mu, g^n_{-i}) \bigr)   < \eta.
$
Recall that for all $i \in S_n$, $u_i^n$ is uniformly bounded by $M_\varepsilon$ and equicontinuous. As we proved in part (ii), we have that 
\[
\left|   \bE \Bigl[  u_i^n \bigl(  x_{\mu},   \lambda_n(i) \mu + \sum\limits_{j \in I_n \backslash \{i\} } \lambda_n(j) g^n(j)   \bigr)     \Bigr]   - \bE \Bigl[  u_i^n \bigl( x_{\mu},   \sum\limits_{j \in I_n} \lambda_n(j) g^n(j)    \bigr)    \Bigr]  \right| \le  \dfrac{\varepsilon}{4( 2M_\varepsilon + 1  )},
\]
for all  $n \ge  \widetilde N_3  $, $i \in S_n$, $\mu \in \cM(A)$. Thus for all $n \ge  \max \{ \widetilde N_2, \widetilde N_3\}$, $i \in S_n$, $\mu \in \cM(A)$,
\[
\left|  \bE \Bigl[  u_i^n \bigl(  x_{\mu},   \lambda_n(i) \delta_{ x_{\mu}} + \sum\limits_{j \in I_n \backslash \{i\} } \lambda_n(j) \delta_{x^n_j}   \bigr)    \Bigr] -  \bE \Bigl[  u_i^n \bigl( x_{\mu},   s(\widetilde g) \bigr)    \Bigr]  \right|  \le \frac{\varepsilon}{4} + \dfrac{\varepsilon}{2( 2M_\varepsilon + 1  )}.
\]
That is, 
$$\Bigl|u_i^n ( \mu, g_{-i}^n) - u_i^n \bigl(\mu, s(\widetilde g)\bigr)\Bigr| \le \frac{\varepsilon}{4} + \dfrac{\varepsilon}{2( 2M_\varepsilon + 1  )}.
$$

\smallskip

\noindent \textbf{Step 5. } For any $n \geq \max \{ N_{\varepsilon}, \widetilde N_{1}, \widetilde N_{2}, \widetilde N_3\}$, $i \in S_n$, $\mu \in \cM(A)$, we have
\begin{align*}
	&\bE \Bigl[  u_i^n \bigl( x^n_i,   \sum\limits_{j \in I_n} \lambda_n(j) \delta_{x^n_j}   \bigr)    \Bigr] \\
	& \ge
	\bE \Bigl[  u_i^n \bigl( x^n_i,   s(\widetilde g) \bigr)    \Bigr] - \frac{\varepsilon }{4} - \frac{\varepsilon }{4(2M_\varepsilon + 1)} \\
	& \ge \bE \Bigl[ u_i^n \bigl(  x_{\mu} ,  s(\widetilde g)  \bigr) \Bigr]  - \frac{\varepsilon }{4} - \frac{\varepsilon }{4(2M_\varepsilon + 1)} \\
	&\ge  \bE \Bigl[  u_i^n \bigl( x_{\mu} ,  \lambda_n(i) \delta_{ x_{\mu}} + \sum\limits_{j \in I_n \backslash \{i\}} \lambda_n(j) \delta_{x^n_j}   \bigr)    \Bigr] - \frac{\varepsilon }{2} - \frac{\varepsilon }{2(2M_\varepsilon + 1)}.
\end{align*}
The first inequality follows from step 3 and the last inequality follows from step 4.  The second inequality is due to the fact that $\widetilde g$ is a Nash equilibrium of $G$ such that for any player $i \in I$, her strategy $\widetilde{g}(i)$ is a best response with respect to the society summary  $s(\widetilde{g})$, and $G_n(i) \in G(I)$ for all $i \in I_n$, $n \in \bZ_{+}$.
Hence we have that
\[
\bE \Bigl[  u_i^n \bigl( x^n_i,   \sum\limits_{j \in I_n} \lambda_n(j) \delta_{x^n_j}   \bigr)    \Bigr]
\ge
\bE \Bigl[  u_i^n \bigl( x_{\mu} ,  \lambda_n(i) \delta_{ x_{\mu}} + \sum\limits_{j \in I_n \backslash \{i\}} \lambda_n(j) \delta_{x^n_j}   \bigr)    \Bigr] - \varepsilon,
\]
for all  $n \ge \max\{ N_{\varepsilon}, \widetilde N_1, \widetilde N_2, \widetilde N_3\}$, $i \in S_n$, $\mu \in \cM(A)$.
Thus we conclude that $g^{n}$  is an $\varepsilon$-Nash equilibrium of $G_{n}$, for all $n \ge \max\{ N_{\varepsilon}, \widetilde N_1, \widetilde N_2, \widetilde N_3\}$.

Given any sequence  $\{\varepsilon_n\}_{n \in \bZ_{+}}$ such that $\varepsilon_n > 0$ for all $n \in \bZ_{+}$ and $\{\varepsilon_n\}_{n \in \bZ_{+}}$ converges to $0$, there exists a strictly increasing sequence  $\{\overline N_n\}_{n \in \bZ_{+}}$ such that $\overline N_n  \in \bZ_{+}$ for all $n \in \bZ_{+}$, and $g^{m}$ is a $\varepsilon_n$-Nash equilibrium of $G_{m}$ for all $m \ge \overline N_n $. Let $\varepsilon_k= \varepsilon_n$ if $ \overline N_n \le k < \overline N_{n+1}$, for all $n, k \in \bZ_{+}$. Thus we have that $\{ \varepsilon_k\}_{k \in \bZ_{+} }$ converges to $0$, and $g^k$ is an $\varepsilon_k$-Nash equilibrium of $G_k$ for all $k \ge \overline N_1$, which completes our proof of Theorem~\ref{thm:main1}.

\subsection{Proof of Theorem~\ref{thm:main2}}\label{subsec-proof ex post}
We need to show that, for any $\varepsilon > 0$ and any $\alpha>0$, there exists an integer $N$ such that for any $n \ge N$, the following holds:
$$\bP(\Omega^{n}_\varepsilon) \ge 1 - \alpha.$$
To simplify the analysis, we will adopt the same notation as in the previous proof. For each $\omega \in \Omega$, $x^n(\omega)$ represents a possible realization of $g^n$. Thus, to show that $x^n(\omega)$ is an approximate equilibrium, we only need to estimate the following difference:
\begin{align*}
&u_i^n\bigl( x^n(\omega) \bigr) - u^n_i\bigl(a, x^n_{-i}(\omega) \bigr)\\ 
= &u_i^n\bigl( x^n_i(\omega), s(x^n)(\omega) \bigr) - u^n_i\Bigl(a, s\bigl(\delta_a, x^n_{-i}(\omega)\bigr) \Bigr)
\end{align*}
where $a \in A$, and $s\bigl(\delta_a, x^n_{-i}(\omega)\bigr) = \lambda_n(i) \delta_a +  \sum_{j \in I_n \backslash \{i\}   } \lambda_{n}(j)  \delta _{x^n_j(\omega)}$.

Clearly, the difference between $u_i^n\bigl( x^n_i(\omega), s(x^n)(\omega) \bigr)$ and $u^n_i\Bigl(a, s\bigl(\delta_a, x^n_{-i}(\omega)\bigr) \Bigr)$ can be decomposed into three parts:
\begin{align*}
&u_i^n\bigl( x^n_i(\omega), s(x^n)(\omega) \bigr) - u^n_i\Bigl(a, s\bigl(\delta_a, x^n_{-i}(\omega)\bigr) \Bigr)\\
= &u_i^n\bigl( x^n_i(\omega), s(x^n)(\omega) \bigr) - u_i^n\bigl( x^n_i(\omega), s(g^n) \bigr) \\
&+ u_i^n\bigl( x^n_i(\omega), s(g^n) \bigr) - u_i^n\bigl(a, s(g^n) \bigr) \\
&+ u_i^n\bigl(a, s(g^n) \bigr) - u^n_i\Bigl(a, s\bigl(\delta_a, x^n_{-i}(\omega)\bigr) \Bigr).
\end{align*}
By Lemma~\ref{lem-norm}, $\rho \bigl( s(x^n), s(g^n) \bigr) \to 0$  in probability, which implies that there exists an integer $N_1$ and a  sequence of subset $\Omega^n \subset \Omega$ such that $\bP(\Omega^n) \ge 1-\alpha$, and for each $\omega \in \Omega^n$, $n \ge N_1$, $i \in S_n$, we have
$$u_i^n\bigl( x^n_i(\omega), s(x^n)(\omega) \bigr) - u_i^n\bigl( x^n_i(\omega), s(g^n) \bigr) \ge -\tfrac{\varepsilon}{4},$$
and
$$ u_i^n\bigl(a, s(g^n) \bigr) - u^n_i\Bigl(a, s\bigl(\delta_a, x^n_{-i}(\omega)\bigr) \Bigr) \ge -\tfrac{\varepsilon}{4}.$$

Thus, it suffices to estimate the term $u_i^n\bigl( x^n_i(\omega), s(g^n) \bigr) - u_i^n\bigl(a, s(g^n) \bigr)$. Similarly, it can be decomposed into the following three parts:
\begin{align*}
&u_i^n\bigl( x^n_i(\omega), s(g^n) \bigr) - u_i^n\bigl(a, s(g^n) \bigr) \\
= &u_i^n\bigl( x^n_i(\omega), s(g^n) \bigr) - u_i^n\bigl( x^n_i(\omega), g^n_{-i} \bigr) \\
&+ u_i^n\bigl( x^n_i(\omega), g^n_{-i} \bigr) - u^n_i(a, g^n_{-i}) \\
&+ u^n_i(a, g^n_{-i}) - u_i^n\bigl(a, s(g^n) \bigr).
\end{align*}
According to the Step 2 of the previous proof, there exists an integer $N_2$ such that for each $n \ge N_2$,  we have
$$u_i^n\bigl( x^n_i(\omega), s(g^n) \bigr) - u_i^n\bigl( x^n_i(\omega), g^n_{-i} \bigr) \ge -\tfrac{\varepsilon}{6}$$
and
$$u^n_i(a, g^n_{-i}) - u_i^n\bigl(a, s(g^n) \bigr) \ge -\tfrac{\varepsilon}{6}.$$

By Theorem~\ref{thm:main1}, we know that $g^n$ is an approximate equilibrium  of $G_n$. 
By using the approximate equilibrium property of $g^n$, there exists an integer $N_3$ such that 
$$u_i^n\bigl( x^n_i(\omega), g^n_{-i} \bigr) - u^n_i(a, g^n_{-i}) \ge -\tfrac{\varepsilon}{6}$$
for $n \ge N_3$. Therefore, for each $n \ge \max\{N_2, N_3\}$,
$$u_i^n\bigl( x^n_i(\omega), s(g^n) \bigr) - u_i^n\bigl(a, s(g^n) \bigr) \ge -\tfrac{\varepsilon}{2}.$$

In conclusion, let $N = \max\{N_1, N_2, N_3\}$. Then, for each $\omega \in \Omega^*$, we have
$$u_i^n\bigl( x^n(\omega) \bigr) - u^n_i\bigl(a, x^n_{-i}(\omega) \bigr) \ge -\tfrac{\varepsilon}{4} -\tfrac{\varepsilon}{2} -\tfrac{\varepsilon}{4} =  -\varepsilon,$$
which implies that $x^n(\omega)$ is an $\varepsilon$-Nash equilibrium. Therefore, 
$$\bP(\Omega^{n}_\varepsilon) \ge \bP(\Omega^n) \ge 1 - \alpha.$$


{\small
\singlespacing

\end{document}